\newtheorem{theorem}{Theorem}[section]
\newtheorem{lemma}{Lemma}[section]
\newtheorem*{lemma*}{Lemma}
\newtheorem{claim}{Claim}[section]
\newtheorem*{claim*}{Claim}
\newcommand{\be}{\begin{equation}}
\newcommand{\ee}{\end{equation}}
\newcommand{\beq}{\begin{equation*}}
\newcommand{\eeq}{\end{equation*}}
\newcommand{\AutoAdjust}[3]{\mathchoice{ \left #1 #2  \right #3}{#1 #2 #3}{#1 #2 #3}{#1 #2 #3} }
\newcommand{\Xcomment}[1]{{}}
\newcommand{\InBrackets}[1]{\AutoAdjust{[}{#1}{]}}\newcommand{\Ex}[2][]{\operatorname{\mathbf E}_{#1}\InBrackets{#2}}
\newcommand{\Prx}[2][]{\operatorname{\mathbf{Pr}}_{#1}\InBrackets{#2}}
\newcommand{\eqdef}{\overset{\mathrm{def}}{=\mathrel{\mkern-3mu}=}}
\newcommand{\vect}[1]{\ensuremath{\mathbf{#1}}}
\newcommand{\RN}[1]{\textup{\uppercase\expandafter{\romannumeral#1}}}
\newcommand\restr[2]{{\left.\kern-\nulldelimiterspace #1 \vphantom{\big|} \right|_{#2} }}
\def\prob{\Prx}
\def\expect{\Ex}
\newcommand{\alg}{\textsf{ALG}}
\newcommand{\opt}{\textsf{OPT}}
\newcommand{\dd}{\mathrm{d}}
\newcommand{\vals}{\vec{v}}
\newcommand{\dist}{\mathbf{D}}
\newcommand{\dists}{\vect{\dist}}
\title{Order Selection Prophet Inequality: \\
From Threshold Optimization to Arrival Time Design}
\author{
Bo Peng \thanks{ITCS, Shanghai University of Finance and Economics, \texttt{ahqspbo@gmail.com}}
\and
Zhihao Gavin Tang \thanks{ITCS, Shanghai University of Finance and Economics, \texttt{tang.zhihao@mail.shufe.edu.cn}}
}
\date{}
\begin{document}

\maketitle

\begin{abstract}
In the classical prophet inequality, a gambler faces a sequence of items, whose values are drawn independently from known distributions. Upon the arrival of each item, its value is realized and the gambler either accepts it and the game ends, or irrevocably rejects it and continues to the next item. The goal is to maximize the value of the selected item and compete against the expected maximum value of all items. 
A tight competitive ratio of $\frac{1}{2}$ is established in the classical setting and various relaxations have been proposed to surpass the barrier, including the i.i.d. model, the order selection model, and the random order model.

In this paper, we advance the study of the order selection prophet inequality, in which the gambler is given the extra power for selecting the arrival order of the items. 
Our main result is a $0.725$-competitive algorithm, that substantially improves the state-of-the-art $0.669$ ratio by Correa, Saona and Ziliotto~(Math. Program. 2021), achieved in the harder random order model.
Recently, Agrawal, Sethuraman and Zhang~(EC 2021) proved that the task of selecting the optimal order is NP-hard. 
Despite this fact, we introduce a novel algorithm design framework that translates the discrete order selection problem into a continuous arrival time design problem. From this perspective, we can focus on the arrival time design without worrying about the threshold optimization afterwards.
As a side result, we achieve the optimal $0.745$ competitive ratio by applying our algorithm to the i.i.d. model. \end{abstract}

\section{Introduction}
\label{sec:intro}
Prophet inequality has been a cornerstone of optimal stopping theory, since the classical result of Krengel and Sucheston~\cite{krengel1977semiamarts,krengel1978}. Consider a gambler facing a sequence of items, whose values are drawn independently from known distributions. After seeing an item, the gambler observes its realized value, and either accepts it and the game ends, or irrevocably rejects it and continues to the next item. The classical prophet inequality states that the gambler can achieve at least half of the expected maximum value. The latter is referred to as a prophet, who knows the realization of all values beforehand. Furthermore, the ratio of half is proven to be the best possible in the worst case.
Later, Samuel-Cahn~\cite{samuel1984} showed that the competitive ratio\footnote{We choose to use the terminology competitive ratio, due to the online nature of prophet inequality.} of $\frac{1}{2}$ can be achieved using a single-threshold algorithm.

In the past fifteen years, there has been an increased interest of prophet inequality related problems in the algorithmic game theory and online algorithms literature, due to its close connection to mechanism design and posted pricing mechanisms~\cite{aaai/HajiaghayiKS07}. Among the fruitful extensions of the classical prophet inequality, a remarkable line of research focuses on surpassing the $\frac{1}{2}$ impossibility result by relaxing the worst case model. Consider the following three variants in progressive order of difficulty.

\paragraph{I.I.D. Model.} Hill and Kertz~\cite{hill1982} studied the case when the value distributions are identical and designed a $1-\frac{1}{e}\approx 0.632$-competitive algorithm. They also constructed a family of instance showing that no algorithm can be better than $0.745$-competitive\footnote{The constant $\Gamma \approx 0.745$ is the unique solution to $\int_0^1 \frac{1}{y (1- \ln y) + 1/\Gamma -1} \dd y = 1$.}. The $1-\frac{1}{e}$ ratio is improved to $0.738$ by Abolhassan et al.~\cite{stoc/AbolhassaniEEHK17}. Recently, Correa et al.~\cite{mor/CorreaFHOV21} designed an optimal $0.745$-competitive algorithm, matching the hardness of Hill and Kertz.

\paragraph{Order Selection Model.} 
In this variant, the gambler is given an extra power for selecting the arrival order of each item. 
This assumption is natural in the application of sequential posted pricing mechanisms~\cite{stoc/ChawlaHMS10}, as the mechanism designer plays the role of the gambler.
Chawla et al.~\cite{stoc/ChawlaHMS10} proposed an $1-\frac{1}{e}$-competitive algorithm and the ratio is later improved to $1-\frac{1}{e}+0.022\approx 0.654$ by Beyhaghi et al.~\cite{or/BeyhaghiGLPS21}. 
This variant subsumes the i.i.d. model as a special case. Indeed, when the value distributions are identical, the extra power of order selection is useless.
Very recently, Agrawal, Sethuraman and Zhang~\cite{ec/AgrawalSZ20} established a negative result, showing that the task of selecting the optimal order is NP-hard, even when the support of each distribution is of size $3$. They also provide a $0.8$-competitive algorithm when the support of each distribution is of size at most $2$.

\paragraph{Random Order Model.} This variant is also known as prophet secretary, in which items arrive in a random order.
This model can be viewed as a generalization of the i.i.d. model and is no easier than the order selection model.
Esfandiari et al.~\cite{siamdm/EsfandiariHLM17} initiated the study of this variant and designed a $1-\frac{1}{e}$-competitive algorithm. 
Later, the same $1-\frac{1}{e}$ ratio is achieved using different strategies, including using personalized but time-invariant thresholds~\cite{mor/CorreaFHOV21}, and a single-threshold algorithm with randomized tie-breaking~\cite{soda/EhsaniHKS18}.
Later, the ratio is improved to $1-\frac{1}{e}+\frac{1}{400}$ by Azar, Chiplunkar, and Kaplan~\cite{ec/AzarCK18} and to $0.669$ by Correa, Saona, and Ziliotto~\cite{mp/CorreaSZ21}. The latter work also establishes a hardness of $\sqrt{3}-1 \approx 0.732$, showing a separation between the random order model and the i.i.d. model.

\subsection{Our Contributions.}
In this work, we focus on the order selection model.
Despite the NP-hardness of selecting the optimal order, we strongly exploit the power of order selection and design a $0.725$-competitive algorithm, that substantially improve the state-of-the-art $0.669$ ratio from the random order model. As a side result, our algorithm is $0.745$-competitive for the i.i.d. prophet inequality.

\paragraph{Previous Approaches.} 
We briefly summarize the previous techniques. 
Naturally, an algorithm is consisted of two parts: selecting the order and setting the thresholds. 
Each step is easy to optimize on its own. Specifically, when the arrival order is fixed, the optimal thresholds can be calculated through backward induction; when the thresholds are fixed for each item, we can calculate the expected value of each item conditioning on that its value exceeds the threshold, and then set the arrival order to be a descending order of the calculated values.

\begin{figure}
	\centering
	\includegraphics[width=\textwidth]{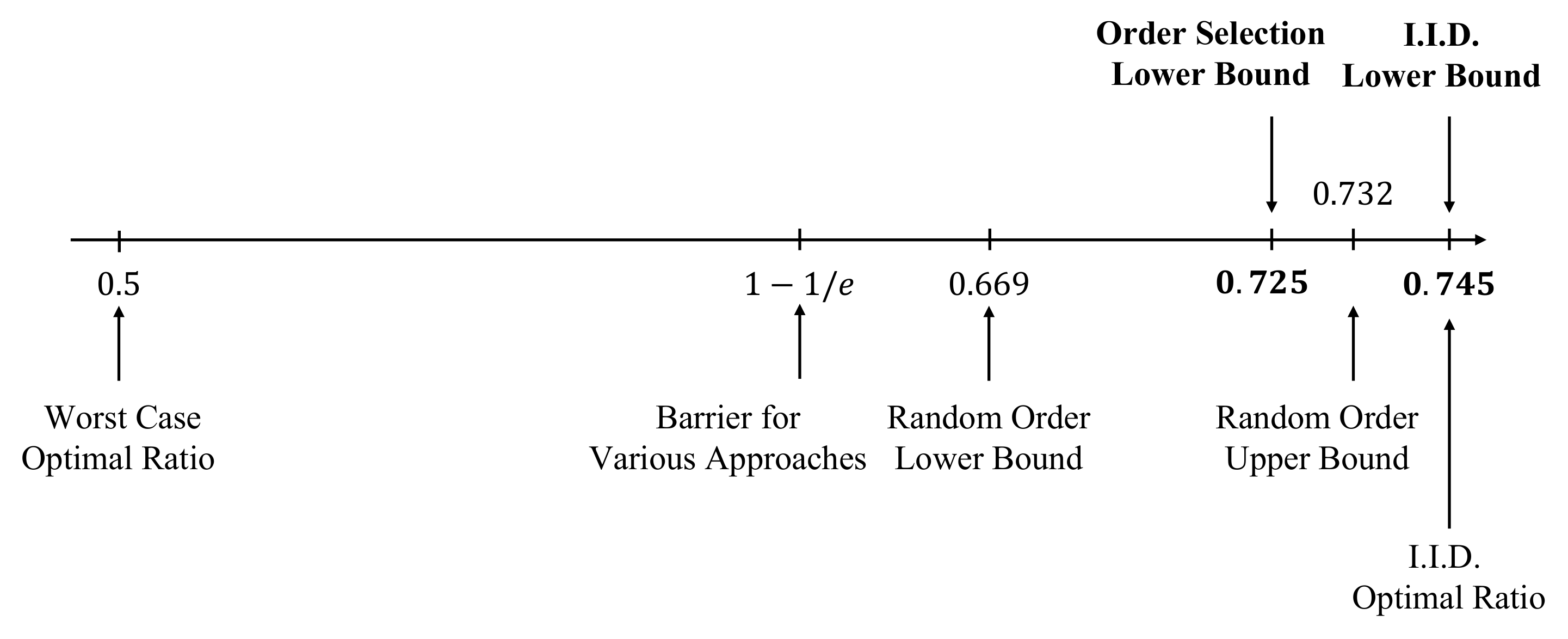}
	\caption{A summary of results. Our new results are marked in bold.
}
	\label{figure:results}
\end{figure}

Chawla et al.~\cite{stoc/ChawlaHMS10} and Beyhaghi et al.~\cite{or/BeyhaghiGLPS21} applied a two-step approach of first designing the thresholds, and then selecting the order. Both works studied the order selection prophet inequality from the perspective of sequential posted pricing mechanisms.
It is implicitly shown by Chawla et al.~\cite{stoc/ChawlaHMS10} that the latter setting reduces to the first setting.

The line of work studying prophet secretary~\cite{siamdm/EsfandiariHLM17, ec/AzarCK18, mp/CorreaSZ21} can be viewed as a two-step approach of first selecting the order, and then designing the thresholds. More accurately, the algorithm selects the uniform distribution over all permutations and then focuses on designing the thresholds. Remarkably, prior to our work, the state-of-the-art $0.669$ ratio for the order selection prophet inequality is established in the random order setting by Correa, Saona, and Ziliotto~\cite{mp/CorreaSZ21}.

\paragraph{Our Perspective: Arrival Time Design.} 
Recall a folklore continuous formulation of the prophet secretary problem. Let the time horizon be $[0,1]$ and assume that each item $i$ arrives at time $t_i \sim \textsf{Uni}[0,1]$ (i.e., the uniform distribution over $[0,1]$). This formulation is equivalent to the random arrival order and often eases the analysis. Specifically, under this formulation, Correa, Saona and Ziliotto~\cite{mp/CorreaSZ21} carefully set time-dependent thresholds and accept the first item whose value exceeds the threshold on its arrival time. 

We provide a novel point of view by re-scaling the time horizon. We first fix the time-dependent thresholds. Specifically, at time $t$, we set the threshold to be the value $\tau(t)$ so that the maximum value of all items is larger than it with probability exactly $t$. Then, we design an arrival time distribution $F_i$ for each item $i$ and let the items arrive at a random time with respect to $F_i$. 
In principle, this formulation is without loss of generality, since we can choose the distributions to be deterministic. 
Under this formulation, we only need to optimize for the arrival times. Moreover, the continuous formulation allows us to adapt the analysis framework from the i.i.d. setting~\cite{mor/CorreaFHOV21} and the random order setting~\cite{mp/CorreaSZ21}.
Noticeably, if the distributions $F_i$ are identical, our algorithm can be implemented in the prophet secretary setting. See Section~\ref{sec:prelim} for a more detailed discussion.

\paragraph{Our Results.}
We explicitly construct arrival time distributions and achieve a competitive ratio of $\Gamma=\frac{\ln \alpha+1}{\ln \alpha+1-\alpha} \approx 0.725$, where $\alpha  \approx 0.211 $ is the unique solution\footnote{For completeness, we provide a proof of the uniqueness of $\alpha$ in Appendix~\ref{app:missing}.} to $\int_{\alpha}^{1}  \frac{\ln \alpha+1}{ (\ln \alpha+1) \left( - x \cdot \ln x + x\right) - \alpha} \dd x +\frac{1}{\ln \alpha}=0$. 

Furthermore, our algorithm serves as an alternative optimal $\Gamma \approx 0.745$-competitive algorithm for the i.i.d. setting, with only one parameter modified, compared to our algorithm in the order selection setting. 
Our unified analysis bridges the i.i.d. setting and the order selection setting, and suggests that our novel \emph{arrival time design} perspective to be the right framework.

\subsection{Related Work}

There is a vast literature on prophet inequalities. We refer interested readers to the survey of Hill and Kertz~
\cite{hill1992survey} for the classical results, the suvery of Lucier~\cite{sigecom/Lucier17} for the economic perspective of prophet inequalities, and the survey of Correa et al.~\cite{sigecom/CorreaFHOV18} for more recent developments. Below, we review the most related works.

Hajiaghayi et al.~\cite{aaai/HajiaghayiKS07}, and Chawla et al.~\cite{stoc/ChawlaHMS10} observed a close relation between prophet inequalities and sequential posted pricing. They showed that designing posted pricing mechanisms can be reduced to the prophet inequality problem.
Recently, Correa et al.~\cite{DBLP:journals/orl/CorreaFPV19} proved that the two settings are indeed equivalent.

Besides the results that we have discussed before, there are a few special cases in which better competitive ratios are known for the order selection prophet inequality problem. When the number of items is a small constant, Beyhaghi et al.~\cite{or/BeyhaghiGLPS21} obtained a better competitive ratio than their general bound of $0.654$. If each type of distribution occurs at least $\Omega(\log n)$ times, Abolhassani et al.~\cite{stoc/AbolhassaniEEHK17} improved the competitive ratio to $0.738$ for the order selection model. Liu et al.~\cite{ec/LiuLPSS21} relaxed the problem by allowing the algorithm to remove a constant number of items. After so, they showed that the competitive ratio can be arbitrary close to $0.745$ against the relaxed prophet.

Closely related to the order selection prophet inequality is the optimal ordering problem. This problem shares the same input model as the order selection prophet inequality, while the benchmark is changed to the optimal online algorithm instead of the expected maximum value. Agrawal, Sethuraman and Zhang~\cite{ec/AgrawalSZ20} proved that the problem is NP-hard, and designed a FPTAS when the support of each distribution is of size $3$. Fu et al.~\cite{DBLP:conf/icalp/FuLX18} gave a PTAS when each distribution has a constant support size. Chakraborty et al.~\cite{DBLP:conf/wine/ChakrabortyEGMM10a} obtained a PTAS without any assumption on the support of the distribution. Their original results were stated in the setting of sequential posted pricing mechanisms, that can be translated to the optimal ordering problem by the reduction of Correa et al.~\cite{DBLP:journals/orl/CorreaFPV19}. 
Liu et al.~\cite{ec/LiuLPSS21} improved the results to an EPTAS based on a novel decomposition technique.

\section{Preliminaries}
\label{sec:prelim}
Let there be $n$ items, whose values $\vals = (v_1,v_2,\cdots,v_n)$ are drawn independently from known distributions $\dists = D_1 \times D_2 \times \cdots \times D_n$. The algorithm first selects an arrival order of the $n$ items. Then, the items arrive in a sequence according to the selected order. Upon the arrival of an item, its value is realized and the algorithm either accepts the item and stops, or rejects the item and continues to the next.
Our goal is to maximize the expected value of the selected item and compare against the prophet 
\[
\opt \eqdef \expect{\max_i v_i}.
\]

For the ease of presentation, we assume the value distributions $D_i$'s are continuous distributions. The extension to discrete distributions can be implemented by a careful tie-breaking rule. We refer to \cite{mp/CorreaSZ21} for a detailed explanation.

Our algorithm is parameterized by $n$ distributions $F_i$ for each $i \in [n]$, supported on $[0,1]$.
Consider the following algorithm:

\begin{tcolorbox}
\paragraph{Independent Arrival Time ($\{F_i\}$).}
\begin{itemize}
\item Sample independently $t_i \sim F_i$ for each $i$. We refer to $t_i$ as the arrival time of item $i$.
\item Let the items arrive in ascending order according to their arrival times.
\item We accept the first item $i$ with $v_i > \tau(t_i)$, where $\tau(t)$ is the threshold that 
\[\prob{\max_{i} v_i > \tau(t)} = t\]
\end{itemize}
\end{tcolorbox}

\paragraph{Remark.}
Before we go to the detailed analysis of our algorithm, it is worthwhile to make a comparison with the algorithm by Correa, Saona, and Ziliotto~\cite{mp/CorreaSZ21} for the \emph{prophet secretary} problem. In the prophet secretary problem (and other online optimization problems with random arrival), a folklore formulation is to assume that each item $i$ arrives at time $t_i \sim \textsf{Uni}[0,1]$ (i.e., the uniform distribution over $[0,1]$). Correa, Saona, and Ziliotto first set time-dependent thresholds $\tau(\alpha(t))$ at time $t$, with an appropriate function $\alpha$, and then accept the first item whose value exceeds the threshold. 

Alternatively, we re-scale the time horizon by fixing the threshold to be $\tau(t)$ at time $t$, and then let the items arrive according to carefully chosen distributions. 
Indeed, if all the distributions $F_i$'s are identical, our algorithm can be implemented in the prophet secretary setting.
Specifically, for any function $\alpha$, by setting $F_i(\alpha^{-1}(t))=t$ for every item $i$, our algorithm is equivalent to the algorithm of Correa, Saona, and Ziliotto~\cite{mp/CorreaSZ21}.
On the other hand, our formulation admits a natural generalization to the order selection setting by allowing non-identical $F_i$'s.

\paragraph{Analysis.}
Our analysis is similar to the framework of \cite{mp/CorreaSZ21}. We abuse $\alg$ to denote our algorithm and to denote the (random) value of the accepted item of our algorithm. We show the competitive ratio of our algorithm through the following stronger statement.

\begin{lemma}
\label{lem:main}
For the order selection prophet inequality, there exists distributions $\{F_i\}_{i \in [n]}$, so that for every $t\in [0,1]$:
\begin{equation*}
\prob{\alg > \tau(t)} \ge \Gamma \cdot t = \Gamma \cdot \prob{\max_i v_i > \tau(t)},
\end{equation*}
where $\Gamma=\frac{\ln \alpha+1}{\ln \alpha+1-\alpha} \approx 0.725$ and $\alpha \approx 0.211 $ is the unique solution to 
$\int_{\alpha}^{1}  \frac{\ln \alpha+1}{ (\ln \alpha+1) \left( - x \cdot \ln x + x\right) - \alpha} \dd x +\frac{1}{\ln \alpha}=0$.
\end{lemma}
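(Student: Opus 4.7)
The plan is to reduce the bound on $\Pr[\alg>\tau(t)]$ to a survival-probability bound, write that survival probability as a product over items, and then construct $\{F_i\}$ so that the product collapses to a one-variable problem whose extremizer gives the stated constants.

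\textbf{Reduction and product form.} Let $S(t)=\Pr[\alg\text{ has not accepted by time }t]$. Since $\tau(t)$ is non-increasing, any value accepted at time $s\le t$ already exceeds $\tau(s)\ge\tau(t)$, so $\Pr[\alg>\tau(t)]\ge 1-S(t)$ and it suffices to produce $\{F_i\}$ with $S(t)\le 1-\Gamma t$ for every $t\in[0,1]$. Writing $q_i(s)=\Pr[v_i>\tau(s)]$ and $f_i$ for the density of $F_i$, item $i$ fails to trigger $\alg$ by time $t$ precisely when $t_i>t$ or $v_i\le\tau(t_i)$, which has probability $1-\int_0^t f_i(s)q_i(s)\,\dd s$. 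Independence across items then yields
\[ S(t)=\prod_{i=1}^n\Bigl(1-\int_0^t f_i(s)q_i(s)\,\dd s\Bigr),\]
and the defining property of $\tau$ supplies the cross-item identity $\prod_i(1-q_i(s))=1-s$.

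\textbf{Coupling $F_i$ to $q_i$ (the crux).} The order-selection freedom lets us choose $F_i$ separately for each item, and I would exploit this by coupling the $F_i$'s through a single profile $\phi$: take $F_i$ to be an explicit function of $q_i$, so that $\int_0^t f_i(s)q_i(s)\,\dd s$ becomes a common function of the single scalar $q_i(t)$. With such a coupling, the product in the previous display rewrites into a form where the cross-item identity applies directly, and the whole expression collapses to a quantity depending only on $t$ and on $\phi$. This is where order selection does real work: in the random-order model all $F_i$ are forced to be uniform on $[0,1]$, and the symmetrization fails.

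\textbf{ODE and the constants.} The resulting one-variable problem is to pick $\phi$ maximizing $\Gamma$ subject to $S(t)\le 1-\Gamma t$ for all $t$. Setting up the first-order optimality condition leads to a two-regime extremal solution: a boundary regime on $[0,\alpha]$ (forced by $F_i(1)=1$ and $\int_0^1 f_i(s)\,\dd s=1$) and a smooth interior regime on $[\alpha,1]$ whose integrand is exactly $\frac{\ln\alpha+1}{(\ln\alpha+1)(-x\ln x+x)-\alpha}$. Matching the two regimes at $t=\alpha$ and imposing normalization yield the transcendental equation in the statement, and substituting the stationary point back gives $\Gamma=(\ln\alpha+1)/(\ln\alpha+1-\alpha)$. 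The $0.745$-competitive i.i.d.\ specialization is obtained by restricting to $F_i\equiv F$, which only modifies the boundary matching and recovers the classical tight constant.

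\textbf{Main obstacle.} The coupling step is the delicate part. A naive $-\log(1-x)\ge x$ bound on $-\log S(t)$ is too lossy because it does not exploit the identity $\prod_i(1-q_i(s))=1-s$, which is the only structure that can push past the $1-1/e$ barrier. The correct argument must simultaneously use that identity and the freedom to tailor $F_i$ to $q_i$; verifying that a single profile $\phi$ works for every admissible $\{D_i\}$ and every $t$ is what forces the two-regime structure and the precise value of $\alpha$. Once this collapse is in place, the ODE computation is routine.
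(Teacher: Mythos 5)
Your opening reduction contains a fatal weakening. You replace the target $\Prx{\alg > \tau(t)} \ge \Gamma t$ by the sufficient condition $S(t) \le 1-\Gamma t$, where $S(t)$ is the probability that the algorithm has not yet accepted by time $t$. But this sufficient condition is unachievable, so no choice of $\{F_i\}$ (and no amount of cleverness in the "coupling" and "ODE" steps) can complete the argument from there. A one-item instance already kills it: with $n=1$ we have $\Prx{v_1>\tau(s)}=s$, so $1-S(t)=\int_0^t s\,\dd F_1(s) \le t\cdot F_1((0,t])$, and hence $\frac{1}{t}(1-S(t))\to 0$ as $t\to 0^+$ for every distribution $F_1$ (an atom at $s=0$ contributes nothing since the integrand vanishes there). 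So $S(t)\le 1-\Gamma t$ fails for small $t$. The quantity you discard — the probability that the algorithm accepts some item $i$ at a time $t_i \ge t$ whose value nevertheless exceeds $\tau(t)$ — is not a lower-order correction; it is exactly what saves the bound. In the paper's accounting, $\Prx{\alg \text{ stops before } t} = 1-g(t) = \Gamma t - \Gamma\sum_i p_i(t)(1-q_i(t))$, which falls short of $\Gamma t$ by precisely the amount $\Gamma\sum_i p_i(t)(1-q_i(t))$ recovered from the late-acceptance events (the events $B_i(t)$ in Lemma~\ref{lem:after_t}). Any correct proof must carry both terms.

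Two further points. First, your product formula for $S(t)$ and the identity $\prod_i(1-\Prx{v_i>\tau(s)})=1-s$ are correct and do appear in the paper (equations \eqref{eqn:pt} and Lemma~\ref{lem:before_t}), so that part of the setup is sound. Second, the "coupling/ODE" portion is too schematic to assess as a proof: the paper's actual construction sets $f_i$ proportional to $q_i'(t)/g(t)$ times an exponential correction, where $q_i(t)=\Prx{\max_{j\ne i}v_j>\tau(t)}$ and $g$ is a specific auxiliary function, and the two constants $\alpha,\Gamma$ emerge not from a symmetrized one-variable extremal problem but from verifying $\int_0^1 f_i \le 1$ (plus a slack condition handling the item with $p_i(1)=1$) uniformly over all admissible profiles $(p_i,q_i)$, via the pointwise bound of Claim~\ref{clm:gp} and the ODE comparison argument of Lemma~\ref{lem:integral}. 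To repair your write-up you would need to restore the second term in the decomposition and then supply the explicit densities and the validity verification.
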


Observe that for any non-negative random variable $V$, we have $\expect{V} = \int_0^\infty \prob{V > \tau} \dd \tau$. The above lemma immediately concludes the competitive ratio of our algorithm.

\begin{theorem}
The independent arrival time algorithm with functions $\{F_i\}$ chosen in Lemma~\ref{lem:main} is $\Gamma \approx 0.725$-competitive for the order selection prophet inequality. I.e., $\expect{\alg} \ge \Gamma \cdot \opt$.
\end{theorem}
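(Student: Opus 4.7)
The plan is to derive the theorem as an immediate integration consequence of Lemma~\ref{lem:main}, which already provides the pointwise inequality $\Pr[\alg > \tau(t)] \ge \Gamma \cdot t$ for every $t \in [0,1]$. The only real content of the proof is to translate this quantile-level statement into a statement about expectations, so I do not expect a substantial obstacle here — the theorem is essentially a corollary.

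First I would invoke the layer-cake identity $\expect{V} = \int_0^\infty \Pr[V > \tau]\, \dd \tau$ separately for $V = \alg$ and $V = \max_i v_i$. To line the two expressions up, I would change variables via $\tau = \tau(t)$, where by the defining relation $\Pr[\max_i v_i > \tau(t)] = t$. Since the distributions $D_i$ are continuous, $\tau(\cdot)$ is a strictly decreasing differentiable bijection from $[0,1]$ onto the range of values (with $\tau(0) = \sup$ of the support and $\tau(1) = \inf$), so $\dd \tau = \tau'(t)\,\dd t$ with $\tau'(t) \le 0$. Under this substitution, $\expect{\max_i v_i} = \int_0^1 t \cdot (-\tau'(t))\, \dd t$ and $\expect{\alg} = \int_0^1 \Pr[\alg > \tau(t)] \cdot (-\tau'(t))\, \dd t$, assuming without loss of generality that the support starts at $0$ (otherwise add the common value $\tau(1)$ to both sides, which only helps).

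Finally, I would apply Lemma~\ref{lem:main} pointwise inside the integrand for $\alg$: since $-\tau'(t) \ge 0$,
\[
\expect{\alg} \;=\; \int_0^1 \Pr[\alg > \tau(t)] \cdot (-\tau'(t))\, \dd t \;\ge\; \Gamma \int_0^1 t \cdot (-\tau'(t))\, \dd t \;=\; \Gamma \cdot \expect{\max_i v_i} \;=\; \Gamma \cdot \opt.
\]
The numerical value $\Gamma = \frac{\ln \alpha + 1}{\ln \alpha + 1 - \alpha} \approx 0.725$ is simply inherited from the lemma, so once the lemma is granted no further computation is needed. The only minor subtlety worth handling in the write-up is the change-of-variables justification when some $D_i$ has a heavy tail; this is routine and can be dealt with either by truncation followed by monotone convergence, or by a direct appeal to Fubini's theorem applied to $\int_0^\infty \int \idr{\alg > \tau}\, \dd \mathbb{P}\, \dd \tau$.
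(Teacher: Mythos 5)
Your proposal is correct and matches the paper's argument: the paper likewise derives the theorem immediately from Lemma~\ref{lem:main} via the layer-cake identity $\expect{V} = \int_0^\infty \Pr[V > \tau]\,\dd\tau$, only stating the step in one line rather than spelling out the change of variables as you do. Your added care about the substitution (or the Fubini alternative) is a harmless elaboration of the same route.
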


As a side result, for the i.i.d. prophet inequality, i.e., when the distributions $D_1,D_2, \cdots, D_n$ are identical, our construction in Lemma~\ref{lem:main} works with a different parameter $\Gamma \approx 0.745$. Thus, we give an alternative optimal competitive algorithm for the i.i.d. prophet inequality. Formally, we prove the following lemma and theorem.

\begin{lemma}
\label{lem:iid}
For the i.i.d. prophet inequality, there exists distributions $\{F_i\}_{i \in [n]}$, so that for every $t\in [0,1]$:
\begin{equation*}
\prob{\alg > \tau(t)} \ge \Gamma \cdot t = \Gamma \cdot \prob{\max_i v_i > \tau(t)},
\end{equation*}
where $\Gamma \approx 0.745$ is the unique solution to $\int_0^1 \frac{1}{y (1- \ln y) + 1/\Gamma -1} \dd y = 1$.
\end{lemma}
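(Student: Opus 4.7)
The plan is to specialize the construction of Lemma~\ref{lem:main} to the i.i.d. setting. By the full symmetry of the instance, we may set $F_i \equiv F$ for a single CDF $F$ on $[0,1]$; writing $p(t) := \prob{v_1 > \tau(t)} = 1 - (1-t)^{1/n}$, the task reduces to constructing this single $F$.

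Following the same analysis as in Lemma~\ref{lem:main}, let $q(t)$ be the probability that ALG has not accepted any item by time $t$. Independence of the items yields $q(t) = \bigl(1 - \int_0^t p(s)\,dF(s)\bigr)^n$, and decomposing the event $\{\alg > \tau(t)\}$ by whether acceptance occurs before or after time $t$ — using that $\tau$ is non-increasing, so items accepted at times $s \le t$ automatically satisfy $v > \tau(t)$ — gives the identity
\[
\prob{\alg > \tau(t)} \;=\; 1 - q(t) \;-\; p(t)\int_t^1 \frac{q'(s)}{p(s)}\,ds.
\]

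Setting this expression equal to $\Gamma t$ and differentiating once eliminates the integral and produces the closed-form algebraic relation $q(t) = 1 - \Gamma t + \Gamma\, p(t)/p'(t)$, which in turn pins down $F$ via $1 - q(t)^{1/n} = \int_0^t p(s)\,dF(s)$. Passing to the $n\to\infty$ limit (the extremal regime for the i.i.d. prophet) gives $p(t)/p'(t) \to -(1-t)\ln(1-t)$, so $q(t) \to 1 - \Gamma t - \Gamma(1-t)\ln(1-t)$ and $F'(t) \to \Gamma/q(t)$. The change of variable $y = 1-t$ then turns the normalization $F(1)-F(0)=1$ into exactly
\[
\int_0^1 \frac{dy}{y(1-\ln y) + 1/\Gamma - 1} \;=\; 1,
\]
which is precisely the implicit equation pinning down $\Gamma \approx 0.745$.

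The main obstacle will be making the argument uniform in $n$: the clean integral equation above arises in the limit, while the lemma demands the bound for every finite $n$. The resolution is that the same construction for finite $n$ yields a ratio $\Gamma_n$ whose defining integral equation converges to the one above as $n\to\infty$, with $\Gamma_n \ge \Gamma$ throughout, paralleling exactly the framework of~\cite{mor/CorreaFHOV21}. Two routine side-checks remain: (i) the constructed $F$ must be a bona fide CDF (non-decreasing, $F(0)=0$, $F(1)=1$), which follows from strict positivity of $q(t)$ (equivalently, of $y(1-\ln y) + 1/\Gamma - 1$) on $(0,1]$; and (ii) uniqueness of $\Gamma$, which follows from monotone dependence of the left-hand side of the integral equation on $\Gamma$.
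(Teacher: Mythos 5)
Your derivation reconstructs the paper's construction essentially exactly: the decomposition into ``stopped before $t$'' versus ``accepted after $t$ with $v>\tau(t)$'', the identity $\prob{\alg>\tau(t)} = 1-q(t) - p(t)\int_t^1 \frac{q'(s)}{p(s)}\,\dd s$, and the back-solved relation $q(t)=1-\Gamma t+\Gamma p(t)/p'(t)$ all match the paper (your $q$ is the paper's auxiliary function $g$, and this is precisely the reverse-engineering the paper presents as its informal argument in Appendix~\ref{app:informal}). The limit computation recovering $\int_0^1 \frac{\dd y}{y(1-\ln y)+1/\Gamma-1}=1$ is also correct.

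However, there is a genuine gap exactly where you flag ``the main obstacle.'' The lemma requires the construction to work for every finite $n$ with the \emph{limiting} constant $\Gamma\approx 0.745$, and the only nontrivial requirement is feasibility: $\int_0^1 f(t)\,\dd t\le 1$ (with any deficit placed as a point mass at $t=1$, which is harmless here since $q_i(1)=1$). Your two proposed resolutions do not establish this. Strict positivity of $q(t)$ (your side-check (i)) only makes $F$ non-decreasing; it says nothing about the total mass being at most $1$, since for finite $n$ the density is $\Gamma q_i'(t)/\bigl(g(t)\exp(\Gamma\int_0^t q_i'p_i/g)\bigr)$, not $\Gamma/q(t)$, and the normalization does not hold with equality. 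And the assertion ``$\Gamma_n\ge\Gamma$ throughout'' is precisely the claim that needs proof, not a citation to a framework. The paper closes this gap with a concrete finite-$n$ estimate (Claim~\ref{clm:iid}): after the change of variables $x=q_i(t)$, it shows pointwise that
\[
\tilde{g}_i(x)\,\exp\Bigl(\Gamma\int_0^x \tfrac{\tilde{p}_i(y)}{\tilde{g}_i(y)}\,\dd y\Bigr)\;\ge\;\Gamma\bigl(-(1-x)\ln(1-x)-x\bigr)+1,
\]
proved by checking that $I(x)=\Gamma\int_0^x\tilde p_i/\tilde g_i-\ln h(x)+\ln\tilde g_i(x)$ has $I(0)=0$ and $I'(x)\ge 0$, the latter reducing to $(n-1)\bigl((1-x)^{1/(n-1)}-1\bigr)\ge\ln(1-x)$, i.e.\ $e^z\ge 1+z$. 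This pointwise domination is what makes the finite-$n$ integral at most the limiting integral, which equals $1$ by the definition of $\Gamma$. Without an argument of this kind your proof is incomplete at its decisive step.
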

\begin{theorem}
\label{thm:iid}
The independent arrival time algorithm with functions $\{F_i\}$ chosen in Lemma~\ref{lem:iid} is $\Gamma \approx 0.745$-competitive for the i.i.d. prophet inequality. I.e., $\expect{\alg} \ge \Gamma \cdot \opt$.
\end{theorem}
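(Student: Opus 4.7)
The plan is to derive Theorem~\ref{thm:iid} as an immediate consequence of Lemma~\ref{lem:iid}, reusing verbatim the layer-cake argument the authors invoked just above for the order-selection theorem. First I would rewrite both sides of the target inequality in tail-integral form, $\expect{\alg} = \int_0^\infty \prob{\alg > s}\,\dd s$ and $\opt = \int_0^\infty \prob{\max_i v_i > s}\,\dd s$, using non-negativity of the values. Second, I would change variables $s = \tau(t)$ with $t$ ranging over $[0,1]$; this substitution is legitimate because the continuity assumption on each $D_i$ makes $\tau$ a strictly decreasing absolutely continuous bijection onto the essential range of $\max_i v_i$, and the defining identity $\prob{\max_i v_i > \tau(t)} = t$ turns the integrand for $\opt$ into simply $t \cdot (-\tau'(t))$. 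Third, I would apply the pointwise lower bound $\prob{\alg > \tau(t)} \ge \Gamma \cdot t$ supplied by Lemma~\ref{lem:iid} inside the transformed integral for $\expect{\alg}$; since $-\tau'(t) \ge 0$, the desired inequality $\expect{\alg} \ge \Gamma \cdot \opt$ drops out.

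At the level of the theorem itself no real obstacle appears; every step is routine and parallels the derivation the authors already spelled out for the order-selection case. The substantive work is entirely concentrated in Lemma~\ref{lem:iid}, so the genuine ``hard part'' is upstream: constructing the arrival-time distribution (which, by the symmetry of the identical $D_i$'s, one expects to take as a single common $F$) that makes the pointwise bound hold with the optimal constant, and verifying that this constant coincides with the Hill--Kertz value solving $\int_0^1 \frac{1}{y(1-\ln y) + 1/\Gamma - 1}\,\dd y = 1$. That burden falls on the lemma, not on Theorem~\ref{thm:iid}.
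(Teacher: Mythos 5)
Your proposal is correct and matches the paper's own derivation: the paper likewise treats Theorem~\ref{thm:iid} as an immediate consequence of Lemma~\ref{lem:iid} via the tail-integral identity $\expect{V}=\int_0^\infty \prob{V>\tau}\,\dd\tau$, with all the substantive work residing in the lemma. Your explicit change of variables $s=\tau(t)$ is just a slightly more detailed way of saying what the paper leaves implicit (that the pointwise bound at the thresholds $\tau(t)$ covers every relevant level $s$), so the two arguments are essentially identical.
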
 
\section{Analysis}
In this section, we prove Lemma~\ref{lem:main} and \ref{lem:iid}. We first provide the construction of the distributions $\{F_i\}$ in Section~\ref{sec:construction}, and then prove the stated inequality of Lemma~\ref{lem:main} and \ref{lem:iid} in Section~\ref{sec:competitive}.

Without specifying the constant $\Gamma$ and assuming that our distributions $\{F_i\}$ are well-defined, our constructions and analysis are unified for the non-i.i.d. case and the i.i.d. case.

Finally, in Section~\ref{sec:ratio}, we find the largest possible constants $\Gamma$ for our algorithm to be well-defined for the non-i.i.d. case and the i.i.d. case, respectively.
\subsection{Construction of $\{F_i\}$}
\label{sec:construction}
We explicitly construct the distributions $\{F_i\}_{i \in [n]}$ by defining their probability density functions. We first introduce some notations.
For every $t \in [0,1]$ and $i \in [n]$, let 
\[
p_i(t) \eqdef \prob{v_i > \tau(t)} \quad \text{and} \quad q_i(t) \eqdef \prob{\max_{j \ne i} v_i > \tau(t)}.
\]
With the assumption that the value distributions $\{D_i\}$ are continuous, we have that $p_i(t), q_i(t)$ are non-decreasing continuous functions. Hence, they are differentiable almost everywhere and we will use $p_i'(t), q_i'(t)$ to denote the derivatives. We have the following simple observation according to the definition of $p_i(t),q_i(t)$:
\begin{align}
& \prod_{i} (1-p_i(t)) = 1-t, \quad \forall t \in [0,1] \label{eqn:pt} \\
& \prod_{j \ne i} (1-p_j(t)) = 1-q_i(t), \quad \forall t \in [0,1], \forall i \in [n] \label{eqn:qt}
\end{align}
By taking derivatives on both sides of \eqref{eqn:pt}, we have
\begin{equation}
\label{eqn:derivative_p}
\sum_{i} p_i'(t) \cdot (1-q_i(t)) = \sum_{i} p_i'(t) \cdot \prod_{j \ne i}(1-p_j(t)) = 1.
\end{equation}

Consider equation~\eqref{eqn:pt} when $t=1$, we have $\prod_i (1-p_i(1)) = 0$. Hence, there exists at least one index $i$ with $p_i(1)=1$. Without loss of generality, let it be the index $1$. Consequently, $q_i(1) = 1 - \prod_{j \ne i} (1-q_j(1)) = 1$ for all $i \ne 1$.

Now, we define the distributions as the following.
\begin{tcolorbox}
\paragraph{Construction of $\{F_i\}$.}
\begin{itemize}
    \item Let $g(t) \eqdef \Gamma \cdot \left( \sum_i (1-q_i(t)) \cdot p_i(t) - t \right) + 1$ be an auxiliary function.
    \item For every item $i$, let $f_i(t) \eqdef \Gamma \cdot \frac{q_i'(t)}{g(t)} \cdot \exp \left(- \Gamma \cdot \int_0^t \frac{q_i'(s) \cdot p_i(s)}{g(s)} \dd s\right)$ be the probability density function for its arrival time $t_i \in [0,1)$ and let $t_i=1$ with probability $1 - \int_0^1 f_i(t) \dd t$. 
\end{itemize}
\end{tcolorbox}

\paragraph{Remark.}
We use almost the same construction for the non-i.i.d. case and the i.i.d. case, except for a different choice of the constant $\Gamma$. We shall specify the choice of $\Gamma$ later when it becomes crucial. 

We need to be careful when multiple items arrive at time $t=1$, since the distributions $F_i$'s might have point masses on $t_i=1$ according to our construction.
We resolve this issue by doing a special treatment for item $1$: 1) if $t_i=1$ for $i\ne 1$, we \emph{reject} item $i$ without looking at its realized value; 2) if $t_1 = 1$, we \emph{accept} it without looking at its realized value. Recall that $p_1(1)=\prob{v_1 > \tau(1)} =1$, there is no difference between always accepting item $1$ and setting a threshold of $\tau(1)$ to item $1$.

\paragraph{Intuition.} The construction might look mysterious at the first glance, with the complicated formulas. Indeed, our construction is driven by the analysis and is derived after solving a set of differential equations. We provide an informal argument in Appendix~\ref{app:informal} to provide intuitions how we derive the above distributions. For readers who are familiar with the analysis framework of Correa, Saona, and Ziliotto~\cite{mp/CorreaSZ21}, it would be helpful to check the informal argument before verifying the correctness of our proof. For other readers, we also encourage reading the informal argument after going through the full analysis. Nevertheless, our formal proof below is self-contained.

We first prove two useful mathematical properties of the functions $\{f_i(t)\}$ and $g(t)$.
\begin{lemma}
\label{lem:math_facts}
The functions $f_i(t), g(t)$ satisfy that
\begin{enumerate}
    \item $1-\int_0^t p_{i}(s) f_{i}(s) \dd s = \exp \left(- \Gamma \int_0^t \frac{q_i'(s) \cdot p_i(s)}{g(s)} \dd s\right)$, $\forall t \in [0,1], \forall i\in [n]$;
    \item $g(t) = \prod_{i} \left( 1 - \int_0^t f_i(s) p_i(s) \dd s \right)$, $\forall t \in [0,1]$.
\end{enumerate}
\end{lemma}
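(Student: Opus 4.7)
The plan is to handle the two parts in sequence, since part 1 does most of the work needed for part 2. Introduce the abbreviation $E_i(t) := \exp\bigl(-\Gamma \int_0^t q_i'(s) p_i(s)/g(s)\, \dd s\bigr)$, so that the defining formula for $f_i$ reads simply $f_i(t) = \Gamma\, q_i'(t)/g(t) \cdot E_i(t)$. Differentiating $E_i$ directly gives $E_i'(t) = -\Gamma\, q_i'(t) p_i(t)/g(t) \cdot E_i(t) = -p_i(t) f_i(t)$. Integrating from $0$ to $t$ and using $E_i(0) = 1$ then yields $E_i(t) = 1 - \int_0^t p_i(s) f_i(s)\, \dd s$, which is precisely the identity in part 1.

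For part 2, the target is $g(t) = \prod_i E_i(t)$, and I would establish this by showing both sides agree at $t = 0$ and share the same logarithmic derivative. At $t = 0$, the definition of $\tau$ forces $p_i(0) = 0$ for every $i$, hence $g(0) = 1$, and trivially $\prod_i E_i(0) = 1$. For the log derivatives, on one hand $(\log \prod_i E_i)'(t) = -\Gamma \sum_i q_i'(t) p_i(t)/g(t)$ by the chain rule. On the other hand, differentiating $g(t) = \Gamma\bigl(\sum_i (1-q_i(t)) p_i(t) - t\bigr) + 1$ directly gives
\begin{equation*}
g'(t) = \Gamma \sum_i \bigl[-q_i'(t) p_i(t) + (1-q_i(t)) p_i'(t)\bigr] - \Gamma.
\end{equation*}
The key cancellation is \eqref{eqn:derivative_p}, $\sum_i p_i'(t)(1-q_i(t)) = 1$, which eliminates the two rightmost terms and leaves $g'(t) = -\Gamma \sum_i q_i'(t) p_i(t)$. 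Dividing by $g(t)$ matches the log derivative of $\prod_i E_i$, and standard ODE uniqueness for positive $C^1$ functions concludes $g(t) = \prod_i E_i(t)$. Combined with part 1, this is exactly the second identity.

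The main obstacle is purely bookkeeping: keeping the sign of $\Gamma$ straight in the exponent of $E_i$, and spotting that \eqref{eqn:derivative_p} is the missing ingredient that collapses the awkward-looking derivative of $g$ into a clean form matching the log derivative of $\prod_i E_i$. The construction of $f_i$ has in fact been reverse-engineered so that part 1 holds as an identity, and part 2 then emerges automatically once one notices this cancellation; no further analytic machinery beyond elementary calculus is required.
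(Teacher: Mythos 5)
Your proof is correct and follows essentially the same route as the paper: part 1 is the same recognition that $p_i f_i$ is the negative derivative of the exponential factor, and part 2 rests on the identical computation $g'(t) = -\Gamma\sum_i q_i'(t)p_i(t)$ via \eqref{eqn:derivative_p}, with the paper exponentiating $\int_0^t g'(s)/g(s)\,\dd s$ directly where you invoke ODE uniqueness from the matching initial value $g(0)=1$. The two formulations are interchangeable.
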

\begin{proof}
We verify the first equation by plugging in the definition of $f_i$ to the left hand side:
\begin{align*}
1-\int_0^t p_{i}(s) f_{i}(s) \dd s & = 1-\int_0^t p_{i}(s) \cdot \Gamma \cdot \frac{q_i'(s)}{g(s)} \cdot \exp \left(- \Gamma \cdot \int_0^s \frac{q_i'(x) \cdot p_i(x)}{g(x)} \dd x\right) \dd s \\
& = 1 + \int_{s=0}^t  1 \ \dd \left(\exp \left(- \Gamma \cdot \int_0^s \frac{q_i'(x) \cdot p_i(x)}{g(x)} \dd x\right) \right) \\
& = 1+ \left. \exp \left(- \Gamma \cdot \int_0^s \frac{q_i'(x) \cdot p_i(x)}{g(x)} \dd x\right) \right|_{s=0}^{t} \\
& = \exp \left(- \Gamma \cdot \int_0^t \frac{q_i'(s) \cdot p_i(s)}{g(s)} \dd s\right)
\end{align*}
Next, we prove the second statement. 
We first calculate the derivative of function $g(t)$:
\begin{multline}
\label{eqn:derivative_g}
g'(t) = \left(\Gamma \cdot \left( \sum_i (1-q_i(t)) \cdot p_i(t) - t \right) + 1 \right)' \\
= - \Gamma \cdot \sum_i q_i'(t) \cdot p_i(t) + \Gamma \cdot \sum_i (1-q_i(t)) \cdot p_i'(t) - \Gamma \overset{\eqref{eqn:derivative_p}}{=} - \Gamma \cdot \sum_i q_i'(t) \cdot p_i(t).
\end{multline}
Then, by applying the first stated equation, we have: 
\begin{multline*}
\prod_{i}\left( 1-\int_0^t p_{i}(s) f_{i}(s) \dd s \right) = \prod_{i} \exp \left(- \Gamma \cdot \int_0^t \frac{q_i'(s) \cdot p_i(s)}{g(s)} \dd s\right) \\
= \exp \left(\int_0^t \frac{- \Gamma \cdot \sum_{i}  q_i'(s) \cdot p_i(s)}{g(s)} \dd s\right) \overset{\eqref{eqn:derivative_g}}{=} \exp \left(\int_0^t \frac{g'(s)}{g(s)} \dd s\right) = \exp \left( \left. \ln g(s) \right|_{s=0}^t \right) = \frac{g(t)}{g(0)} = g(t).
\end{multline*}
\end{proof} 
\subsection{Competitive Analysis}
\label{sec:competitive}
For our algorithm to be well-defined, we need to verify that the distributions are valid, i.e. $\int_0^1 f_i(t) \dd t \le 1$ for all $i \in [n]$. This is the crucial place where we have different constants $\Gamma$ for the non-i.i.d. case and i.i.d. case respectively. 

\begin{lemma}
\label{lem:ratio_iid}
For the i.i.d. case, for $\Gamma \approx 0.745$ (the unique solution to $\int_0^1 \frac{1}{y (1- \ln y) + 1/\Gamma -1} \dd y = 1$), and for each $i \in [n]$, we have 
\[
\int_0^1 f_i(t) \dd t \le 1.
\]
\end{lemma}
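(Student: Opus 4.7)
My plan is to exploit the i.i.d.\ symmetry to get explicit formulas, reduce $\int_0^1 f_i(t)\,dt$ to a one-variable integral via two changes of variables, and then match it against the defining equation of $\Gamma$ through a pointwise inequality.

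By the i.i.d.\ assumption, $p_i(t) = p(t) \eqdef 1 - (1-t)^{1/n}$ and $q_i(t) = q(t) \eqdef 1 - (1-t)^{(n-1)/n}$ for every $i$, and by symmetry all $f_i$'s coincide with a common density $f$. Lemma~\ref{lem:math_facts}(2) then specializes to $g(t) = (1-P(t))^n$, where $P(t) \eqdef \int_0^t p(s) f(s)\,ds$, so $1 - P(t) = g(t)^{1/n}$ and the density formula simplifies to $f(t) = \Gamma\, q'(t)\, g(t)^{-(n-1)/n}$. Substituting first $a = (1-t)^{1/n}$ and then $c = a^{n-1}$, an elementary calculation gives
\[
\int_0^1 f(t)\,dt \;=\; \Gamma \int_0^1 \frac{dc}{\phi_n(c)^{(n-1)/n}},
\qquad \phi_n(c) \eqdef \Gamma n c - \Gamma(n-1)\, c^{n/(n-1)} + 1 - \Gamma.
\]
Using the identity $\lim_{r\to 1^+}(rc-c^r)/(r-1) = c(1-\ln c)$, we have $\phi_n(c) \to \phi_\infty(c) \eqdef \Gamma c(1-\ln c) + 1 - \Gamma$ as $n\to\infty$, and the defining equation of $\Gamma$ yields $\Gamma \int_0^1 dc/\phi_\infty(c) = 1$. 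Therefore it suffices to prove the pointwise inequality
\[
\phi_n(c)^{(n-1)/n} \;\ge\; \phi_\infty(c), \qquad c \in [0,1],\ n\ge 2.
\]

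The pointwise inequality is the main obstacle. It is delicate: expanding $c^s$ in powers of $s = 1/(n-1)$ shows that $\phi_n(c) < \phi_\infty(c)$ strictly on $(0,1)$, so the inequality relies on the sublinear exponent $(n-1)/n < 1$ and is not accessible via off-the-shelf tools such as Bernoulli's inequality; moreover both sides agree at $c = 1$ to second order in their Taylor expansions, confirming that the bound is genuinely tight. My plan is to parametrize by the real variable $r = n/(n-1) > 1$, write $\phi_r(c) \eqdef \Gamma(rc - c^r)/(r-1) + 1 - \Gamma$, and prove the stronger monotonicity statement that $r \mapsto \phi_r(c)^{1/r}$ is non-decreasing on $[1,\infty)$ for each fixed $c \in (0,1)$. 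Since $\phi_r(c)^{1/r} \to \phi_\infty(c)$ as $r \to 1^+$, this monotonicity directly yields the desired inequality. Taking logarithms, monotonicity is equivalent to $r\,\partial_r \phi_r(c) \ge \phi_r(c) \ln \phi_r(c)$, a one-variable differential inequality in $r$; at $r = 1$ (where the inequality is tightest) it reduces to the calculus estimate $-\phi_\infty(c) \ln \phi_\infty(c) \ge \tfrac{\Gamma}{2}\,c\,(\ln c)^2$ on $[0,1]$, which can be checked directly by bounding both sides on the interval. Extending to all $r \ge 1$, while tedious, reduces to an analysis of an explicit auxiliary function built from $\phi_r$, after which $\int_0^1 f_i(t)\,dt \le 1$ follows.
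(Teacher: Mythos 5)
Your reduction is correct, and in fact it lands exactly on the paper's key inequality (Claim~\ref{clm:iid}): since $\tilde g_i'(x) = -\Gamma n\,\tilde p_i(x)$ and $\tilde g_i(0)=1$ in the i.i.d.\ case, the paper's factor $\exp\bigl(\Gamma\int_0^x \tilde p_i/\tilde g_i\bigr)$ is precisely $\tilde g_i(x)^{-1/n}$, so the paper's claim is literally your pointwise inequality $\phi_n(c)^{(n-1)/n}\ge\phi_\infty(c)$ after the substitution $c=1-x$. Your derivation of $f(t)=\Gamma q'(t)g(t)^{-(n-1)/n}$ via Lemma~\ref{lem:math_facts} and the change of variables to $\Gamma\int_0^1\phi_n(c)^{-(n-1)/n}\,\dd c$ are both sound, as is the identification of the defining equation of $\Gamma$ with $\Gamma\int_0^1 \dd c/\phi_\infty(c)=1$. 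So the skeleton of the argument is right and matches the paper's.

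The genuine gap is that the crux --- the pointwise inequality itself --- is never proven. You replace it by the strictly stronger claim that $r\mapsto\phi_r(c)^{1/r}$ is non-decreasing on $[1,\infty)$, reduce that to the differential inequality $r\,\partial_r\phi_r\ge\phi_r\ln\phi_r$, verify it only at the endpoint $r=1$ (and even there only assert that it ``can be checked directly''), and defer all $r>1$ as ``tedious.'' Since, as you yourself observe, the inequality is tight to second order at $c=1$, this is exactly the place where hand-waving is not acceptable; as written the proof does not close. You are also making life harder than necessary by differentiating in $r$: the paper fixes $n$ and differentiates in the original variable instead. Concretely, set $J(c)\eqdef\frac{n-1}{n}\ln\phi_n(c)-\ln\phi_\infty(c)$; then $J(1)=0$, and using $\phi_n'(c)=\Gamma n\bigl(1-c^{1/(n-1)}\bigr)$ and $\phi_\infty'(c)=-\Gamma\ln c$ one finds that the sign of $J'(c)$ is governed (after clearing the positive denominators and a short algebraic simplification, as in the paper's computation of $I'(x)$) by $(n-1)\bigl(c^{1/(n-1)}-1\bigr)-\ln c$, which is non-negative by $e^z\ge 1+z$ with $z=\frac{\ln c}{n-1}$. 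Hence $J$ is non-increasing on $[0,1]$ with $J(1)=0$, giving $\phi_n(c)^{(n-1)/n}\ge\phi_\infty(c)$ for all $c\in[0,1]$ with a one-line calculus fact rather than a two-parameter monotonicity analysis. I recommend you either adopt this route or actually carry out the $r$-monotonicity argument in full; in its current form the proposal establishes the reduction but not the lemma.
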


For the non-i.i.d. case, we prove the following stronger statement that automatically implies the validity of our algorithm.
\begin{lemma}
\label{lem:ratio}
For the non-i.i.d. case, for $\Gamma=\frac{\ln \alpha+1}{\ln \alpha+1-\alpha} \approx 0.725$ where $\alpha \approx 0.211 $ is the unique solution to 
$\int_{\alpha}^{1}  \frac{\ln \alpha+1}{ (\ln \alpha+1) \left( - x \cdot \ln x + x\right) - \alpha} \dd x +\frac{1}{\ln \alpha}=0$, and for each $i \in [n]$, we have
\[
(1-\Gamma \cdot q_{i}(1)) \cdot \left( 1 - \int_0^1 f_i(t) \dd t \right) \cdot \exp \left(\Gamma \int_0^1 \frac{q_i'(s) \cdot p_i(s)}{g(s)} \dd s\right) \ge \Gamma \cdot (1-q_i(1)).
\]
\end{lemma}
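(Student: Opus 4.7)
The plan is to apply Lemma~\ref{lem:math_facts}(1) to collapse the stated product, perform a time change so the inequality becomes one about one-dimensional functions, and then identify the adversarial worst case.

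\emph{Simplification and time change.} By Lemma~\ref{lem:math_facts}(1) evaluated at $t=1$, $\exp(\Gamma\int_0^1 q_i'(s)p_i(s)/g(s)\,\dd s) = 1/H_i(1)$ where $H_i(t)\eqdef 1-\int_0^t p_i(s)f_i(s)\,\dd s$. Writing also $h_i(t)\eqdef 1-\int_0^t f_i(s)\,\dd s$, the stated inequality reduces to $(1-\Gamma q_i(1))h_i(1)\ge \Gamma(1-q_i(1))H_i(1)$. For any $i$ with $q_i(1)=1$ (i.e., every index other than the special item with $p_1(1)=1$ selected in Section~\ref{sec:construction}), the RHS vanishes and the claim is immediate from $\Gamma<1$ and $h_i(1)\ge 0$; so assume $Q\eqdef q_i(1)<1$. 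From the construction, $h_i$ and $H_i$ satisfy $h_i'=-\Gamma q_i' H_i/g$ and $H_i'=-\Gamma q_i' p_i H_i/g$. Under the time change $y(t)\eqdef \Gamma\int_0^t q_i'(s)/g(s)\,\dd s$, with $\tilde p(y)\eqdef p_i(t(y))$, these ODEs simplify to $\dd\tilde H_i/\dd y=-\tilde p\,\tilde H_i$ and $\dd\tilde h_i/\dd y=-\tilde H_i$, decoupling from $g$. Setting $Y\eqdef y(1)$ and $P(z)\eqdef \int_0^z\tilde p$, the target now reads
\[
(1-\Gamma Q)\Bigl(1-\int_0^Y e^{-P(z)}\,\dd z\Bigr) \;\ge\; \Gamma(1-Q)\,e^{-P(Y)}.
\]

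\emph{Variational analysis.} What remains is to characterize the triples $(Y,Q,\tilde p)$ realizable by the adversary and to minimize the LHS minus RHS over them. The constraint $\prod_j(1-p_j(t))=1-t$ gives $(1-q_i(t))(1-p_i(t))=1-t$, hence $\tilde p(y)=(t(y)-q_i(t(y)))/(1-q_i(t(y)))$, linking $\tilde p$ and $t(y)$. Combining this with the explicit expression $g(t)=1+\Gamma(\sum_j(t-q_j(t))-t)$ allows $Y$ to be expressed in terms of $Q$ and the adversary's remaining degrees of freedom. A Lagrangian/Euler--Lagrange argument on the adversary's problem then shows that the extremal family of instances is single-parameter --- morally a two-item, or limiting continuum-of-items, configuration. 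Solving the resulting ODE for the worst-case $g$ yields the closed-form primitive $(\ln\alpha+1)(-x\ln x+x)-\alpha$ that appears in the denominator of the defining equation for $\alpha$.

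\emph{Conclusion and main obstacle.} Substituting the extremal configuration back, the first-order condition for the adversary's extremum matches the stated identity
\[
\int_\alpha^1 \frac{\ln\alpha+1}{(\ln\alpha+1)(-x\ln x+x)-\alpha}\,\dd x + \frac{1}{\ln\alpha} \;=\; 0,
\]
which determines $\alpha\approx 0.211$ and hence $\Gamma=(\ln\alpha+1)/(\ln\alpha+1-\alpha)\approx 0.725$ as the largest constant for which the inequality holds for every instance. The main obstacle is precisely this variational step: one must justify the reduction to the single-parameter extremal family, verify monotonicity of the LHS/RHS ratio along that family, and rule out degenerate limits (including what happens as the number of competing items tends to infinity or as $q_i(t)$ develops point masses near $t=1$).
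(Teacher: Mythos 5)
Your setup is sound and matches the paper's in substance: using Lemma~\ref{lem:math_facts}(1) to rewrite the exponential as $1/H_i(1)$, and the time change that reduces the claim to a one-dimensional functional inequality in $(\tilde p, Q, Y)$, are both correct and equivalent (up to reparametrization) to the paper's change of variables $x=q_i(t)$. The problem is that the entire load-bearing step --- the ``variational analysis'' paragraph --- is not a proof. You assert that a ``Lagrangian/Euler--Lagrange argument shows that the extremal family is single-parameter'' and that ``solving the resulting ODE'' produces the primitive defining $\alpha$, but you do not exhibit the admissible class over which you optimize, do not justify why an extremizer exists or has the claimed bang-bang/single-parameter form, and you yourself list these as unresolved obstacles. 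The paper closes this gap with two concrete ingredients that are absent from your proposal: (i) Claim~\ref{clm:gp}, a pointwise lower bound $\tilde g_i(x) \ge \Gamma(-(1-x)\ln(1-x)(1-\tilde p_i(x)) - x)+1$ obtained by bounding the other items' contributions via $\frac{p}{1-p}\ge\ln\frac{1}{1-p}$ and the identity $(1-q_j)(1-p_j)=1-t$ --- this is what turns the $n$-dimensional adversary into a constraint on a single pair $(\tilde p,\tilde g)$; and (ii) Lemma~\ref{lem:integral}, proved not by Euler--Lagrange but by a differential-inequality comparison: $G'(z)\ge\min(\cdot,\cdot)$ because the bound is affine in $\tilde p(z)$ (so the worst case is $\tilde p\in\{0,1\}$ pointwise), followed by a barrier argument with the auxiliary curves $H(z)$, $K(z)$ crossing at $z_1=1-\alpha$ and a monotone majorant $L(z)$. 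Without (i) and (ii), or substitutes for them, your argument does not establish the inequality for arbitrary instances.

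A secondary gap: your dismissal of the indices with $q_i(1)=1$ as ``immediate from $\Gamma<1$ and $h_i(1)\ge 0$'' is circular, because $h_i(1)\ge 0$ is exactly the validity condition $\int_0^1 f_i(t)\,\dd t\le 1$, which is part of what Lemma~\ref{lem:ratio} is designed to deliver (it is the non-i.i.d.\ analogue of Lemma~\ref{lem:ratio_iid} and in the paper it follows only after Lemma~\ref{lem:integral} is applied to the extended functions on all of $[0,1]$). So even the ``easy'' case requires the same machinery you have deferred.
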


We defer the proofs of the above lemmas to the next subsection and continue proving the stated inequality of Lemma~\ref{lem:main} and \ref{lem:iid}, assuming the validity of our algorithm.

Fixing an arbitrary time $t \in [0,1]$, the event that our algorithm accepts an item with value larger than $\tau(t)$ belongs to one of the following $n+1$ possibilities:
\begin{itemize}
    \item Our algorithm stops before time $t$. In this case, the accepted value must be larger than $\tau(t)$, since the threshold function $\tau$ is decreasing.
    \item For some $i \in [n]$, our algorithms accepts item $i$ at time $t_i \ge t$ and $v_i > \tau(t)$.
\end{itemize}

We introduce notations $A_i(t), B_i(t)$ for each $i\in [n]$ to denote the following events:
\begin{itemize}
 \item $A_i(t)$: item $i$ arrives at time $t_i < t$ and $v_i > \tau(t_i)$.
 \item $B_i(t)$: item $i$ is accepted by our algorithm at time $t_i \ge t$ and $v_i > \tau(t)$.
\end{itemize}

\begin{lemma}
\label{lem:before_t}
For any $t \in [0,1]$, $\prob{\alg \text{ stops before time } t} = 1- g(t)$.
\end{lemma}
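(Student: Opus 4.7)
}
The plan is to reduce the event that $\alg$ stops before time $t$ to the disjunction of the events $A_i(t)$, then use independence of the coordinates $(t_i, v_i)$ to turn the probability of the complement into a product, and finally invoke Lemma~\ref{lem:math_facts}(2) to identify that product with $g(t)$.

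First I would argue that $\{\alg \text{ stops before time } t\}$ and $\bigcup_i A_i(t)$ are the same event. The algorithm accepts the first item $i$ (in arrival order) whose value exceeds $\tau(t_i)$; hence it has stopped by time $t$ if and only if there exists some $i$ with arrival time $t_i < t$ and realized value $v_i > \tau(t_i)$, which is exactly $\bigcup_i A_i(t)$. This uses nothing beyond the definition of the algorithm.

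Next, I would take complements and use independence. Because the pairs $(t_i, v_i)$ are mutually independent across $i$ (arrival times are drawn independently from the $F_i$, and values independently from the $D_i$), the events $A_i(t)$ are mutually independent, so
\[
\prob{\alg \text{ stops before time } t} \;=\; 1 - \prod_{i \in [n]} \bigl(1 - \prob{A_i(t)}\bigr).
\]
For a single $i$, conditioning on $t_i = s$ decouples $t_i$ from $v_i$, giving
\[
\prob{A_i(t)} \;=\; \int_0^t f_i(s)\cdot \prob{v_i > \tau(s)}\,\dd s \;=\; \int_0^t f_i(s)\,p_i(s)\,\dd s.
\]

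Finally, substituting this expression into the product and applying Lemma~\ref{lem:math_facts}(2), which states exactly
$g(t) = \prod_{i}\bigl(1 - \int_0^t f_i(s) p_i(s)\,\dd s\bigr)$,
yields $\prob{\alg \text{ stops before time } t} = 1 - g(t)$. There is no real obstacle here: all the heavy lifting has already been done in assembling the identity in Lemma~\ref{lem:math_facts}(2); the only small point to be careful about is to note that the point mass that some $F_i$ may place at $t_i = 1$ does not affect the probability of stopping strictly before $t \le 1$, so the integral representation of $\prob{A_i(t)}$ is valid on $[0,1]$.
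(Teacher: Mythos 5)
Your proposal is correct and matches the paper's own proof essentially verbatim: identify the stopping event with $\bigcup_i A_i(t)$, use independence to get $1-\prod_i(1-\prob{A_i(t)})$, compute $\prob{A_i(t)}=\int_0^t p_i(s)f_i(s)\,\dd s$, and invoke Lemma~\ref{lem:math_facts}(2). The extra remark about the point mass at $t_i=1$ being irrelevant for stopping strictly before $t$ is a correct, if minor, addition.
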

\begin{proof}
Observe that our algorithm stops before time $t$ if and only if at least one of the events $\{A_i(t)\}_{i}$ happens. Moreover, the events $A_i(t)$ are independent from each other for different $i$'s. Consequently,
\begin{multline*}
\prob{\alg \text{ stops before time } t} = \prob{\cup_i A_i(t)} = 1 - \prod_{i} \left(1-\prob{A_i(t)} \right) \\
= 1 - \prod_i \left( 1 - \int_0^t p_i(t_i) \cdot f_i(t_i) \dd t_i \right) = 1 - g(t),
\end{multline*}
where the last equation follows from the second statement of Lemma~\ref{lem:math_facts}.
\end{proof}

Next, we study the events $\{B_i(t)\}$. 
\begin{lemma}
\label{lem:after_t}
For any $t \in [0,1]$ and $i \in [n]$, $\prob{B_i(t)} \ge \Gamma \cdot p_i(t) \cdot (1-q_i(t))$.
\end{lemma}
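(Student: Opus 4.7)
The plan is to compute $\prob{B_i(t)}$ by conditioning on item $i$'s arrival time $t_i$, splitting the continuous part $t_i \in [t, 1)$ from the point mass at $t_i = 1$ (which only matters for $i = 1$, since by construction items $i \ne 1$ arriving at time $1$ are rejected).

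First, for $s \in [t, 1)$, conditioning on $t_i = s$: the event $B_i(t)$ requires $v_i > \tau(t)$ (which by monotonicity of $\tau$ implies $v_i > \tau(s)$, so item $i$ is indeed acceptable when it arrives) and that no item $j \neq i$ has been accepted in $[0, s)$. By independence across items and Lemma~\ref{lem:math_facts}(2), the latter probability equals $g(s) / (1 - \int_0^s f_i(u) p_i(u) \dd u)$. The key algebraic observation, obtained by combining the definition of $f_i$ with Lemma~\ref{lem:math_facts}(1), is that
\begin{equation*}
f_i(s) \cdot \frac{g(s)}{1 - \int_0^s f_i(u) p_i(u) \dd u} = \Gamma \cdot q_i'(s).
\end{equation*}
Consequently the contribution of the continuous part simplifies to $\int_t^1 \Gamma q_i'(s) p_i(t) \dd s = \Gamma p_i(t) (q_i(1) - q_i(t))$. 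For every $i \neq 1$ we have $q_i(1) = 1$ (noted in Section~\ref{sec:construction}), so the continuous part alone already yields the desired bound, in fact with equality.

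For $i = 1$, where $q_1(1)$ may be strictly less than $1$, we must add the point-mass contribution at $t_1 = 1$, when item $1$ is accepted unconditionally. This extra term equals $p_1(t) \cdot (1 - \int_0^1 f_1(u) \dd u) \cdot \prod_{j \neq 1} (1 - \int_0^1 f_j(u) p_j(u) \dd u)$, which by Lemma~\ref{lem:math_facts}(2) rewrites as $p_1(t) \cdot g(1) \cdot (1 - \int_0^1 f_1(u) \dd u) / (1 - \int_0^1 f_1(u) p_1(u) \dd u)$. A direct evaluation of $g(1)$ using $p_1(1) = 1$ and $q_j(1) = 1$ for $j \neq 1$ gives $g(1) = 1 - \Gamma q_1(1)$, which matches exactly the leading factor in Lemma~\ref{lem:ratio}. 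Applying that lemma with $i = 1$ lower-bounds the point-mass contribution by $\Gamma p_1(t)(1 - q_1(1))$, and adding it to the continuous part yields $\Gamma p_1(t) (q_1(1) - q_1(t)) + \Gamma p_1(t)(1 - q_1(1)) = \Gamma p_1(t)(1 - q_1(t))$, as desired.

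The main obstacle I expect is the careful bookkeeping at $t = 1$: recognizing that the special-case point-mass contribution for item $1$ lines up precisely with the hypothesis of Lemma~\ref{lem:ratio} via the identity $g(1) = 1 - \Gamma q_1(1)$. Once this identification is made, the rest of the argument is a routine substitution using the two parts of Lemma~\ref{lem:math_facts}.
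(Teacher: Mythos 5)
Your proposal is correct and follows essentially the same route as the paper's proof: the continuous part of the arrival-time distribution is handled via the identity $f_i(s)\,g(s)/\bigl(1-\int_0^s f_i p_i\bigr)=\Gamma q_i'(s)$ (which is exactly how the paper's chain of equalities collapses the integrand using Lemma~\ref{lem:math_facts} and the definition of $f_i$), and the point mass at $t_1=1$ is bounded via $g(1)=1-\Gamma q_1(1)$ together with Lemma~\ref{lem:ratio}. The bookkeeping you flag as the main obstacle is indeed the only delicate step, and you have resolved it the same way the authors do.
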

\begin{proof}
For any $i$, fixing the arrival time $t_i \in (t,1)$ of $i$ and conditioning on that its realized value $v_i$ is larger than $\tau(t)$, our algorithm accepts it as long as we haven't stopped before time $t_i$. Specifically, the last event happens when none of the $\{A_j(t_i)\}_{j \ne i}$ happens. Thus,
\begin{align*}
\prob{B_i(t)} & \ge \int_t^1 f_i(t_i) \cdot \prob{v_i > \tau(t)} \cdot \prob{i \text{ is accepted by } \alg \mid t_i, v_i > \tau(t)} \dd t_i \\
& = \int_t^1 f_i(t_i) \cdot p_i(t) \cdot \prod_{j \ne i} \left( 1 - \prob{A_j(t_i)} \right) \dd t_i \\
& = p_i(t) \cdot \int_t^1 f_i(t_i) \cdot \prod_{j \ne i} \left( 1 - \int_0^{t_i} p_j(t_j) \cdot f_j(t_j) \dd t_j \right) \dd t_i \\
& = p_i(t) \cdot \int_t^1 f_i(t_i) \cdot \frac{\prod_{j} \left( 1 - \int_0^{t_i} p_j(t_j) \cdot f_j(t_j) \dd t_j \right)}{1 - \int_0^{t_i} p_i(s) f_i(s) \dd s} \dd t_i \\
& = p_i(t) \cdot \int_t^1 f_i(t_i) \cdot \frac{g(t_i)}{\exp \left(-\Gamma \int_0^{t_i} \frac{q_i'(s) \cdot p_i(s)}{g(s)} \dd s\right)} \dd t_i \tag{by Lemma~\ref{lem:math_facts}} \\
& = p_i(t) \cdot \int_t^1 f_i(t_i) \cdot g(t_i) \cdot \exp \left(\Gamma \int_0^{t_i} \frac{q_i'(s) \cdot p_i(s)}{g(s)} \dd s\right) \dd t_i \\
& = p_i(t) \int_t^1 \Gamma \cdot q_i'(t_i) \dd t_i \tag{by the definition of $f_i$}\\
& = \Gamma \cdot p_i(t) \cdot (q_i(1) - q_i(t)) 
\end{align*}
For $i \ne 1$, we conclude the proof of the statement by noticing that $q_i(1)=1$.
However, $q_1(1)$ not necessarily equals $1$.
We remark that for the i.i.d. case, all distributions are symmetric and the above analysis is sufficient since $q_i(1)=1$ for all $i \in [n]$. The rest of our proof is only for the non-i.i.d. case.

Note that the above analysis ignores the point mass of $F_1$ on $t_1=1$ and recall that we have a special treatment of item $1$ when it arrives at time $1$. 
It suffices to calculate the extra probability when item $1$ is accepted at time $1$ and $v_1 > \tau(t)$.
\begin{align*}
& \prob{\alg \text{ accepts item 1 at time 1 and } v_1 > \tau(t)} \\
& = \prob{t_1 = 1} \cdot \prob{v_1 > \tau(t)} \cdot \prod_{j \ne 1} \left( 1 - \prob{A_j(1)} \right) \\
& = \left( 1 - \int_0^1 f_1(t_1) \dd t_1 \right) \cdot p_1(t) \cdot \left( g(1) \cdot \exp \left(\Gamma \int_0^1 \frac{q_1'(s) \cdot p_1(s)}{g(s)} \dd s\right) \right) \\
& = (1-\Gamma \cdot q_{1}(1)) \cdot p_1(t) \cdot \left( 1 - \int_0^1 f_1(t_1) \dd t_1 \right) \cdot \exp \left(\Gamma \int_0^1 \frac{q_1'(s) \cdot p_1(s)}{g(s)} \dd s\right) \\
& \ge \Gamma \cdot p_1(t) \cdot (1-q_1(1)).
\end{align*}
Here, the third equality follows from the fact that $g(1) = \Gamma \cdot \left( \sum_i (1-q_i(1)) \cdot p_i(1) - 1 \right) + 1 = 1 - \Gamma \cdot q_{1}(1)$; the last inequality follows from Lemma~\ref{lem:ratio}. This concludes the proof of the lemma.
\end{proof}

Equipped with the above lemmas, we conclude the proof of Lemma~\ref{lem:main} and \ref{lem:iid}.
\begin{align*}
\prob{\alg > \tau(t)} & = \prob{\alg \text{ stops before time } t} + \sum_{i} \prob{B_i(t)} \\
& \ge 1- g(t) + \sum_{i} \Gamma \cdot p_i(t) \cdot (1-q_i(t)) \tag{by Lemma~\ref{lem:before_t} and \ref{lem:after_t}}\\
& = 1 - \left( \Gamma \cdot \left( \sum_i (1-q_i(t)) \cdot p_i(t) - t \right) + 1 \right) + \Gamma \cdot \sum_i p_i(t) \cdot (1-q_i(t)) \\
& = \Gamma \cdot t~.
\end{align*}

\subsection{Calculation of $\Gamma$}
\label{sec:ratio}
Finally, we prove Lemma~\ref{lem:ratio_iid} and \ref{lem:ratio}. Recall the definition of $f_i(t)$. We have that
\[
\int_0^1 f_i(t) \dd t = \int_0^1  \frac{\Gamma \cdot q_i'(t)}{g(t)  \exp \left( \Gamma \cdot \int_0^t \frac{q_i'(s) \cdot p_i(s)}{g(s)} \dd s\right)}  \dd t 
\]
Since $q_i(t)$ is continuous and non-decreasing, we do the following change of variables: for $x \in [0,q_{i}(1)]$,
\begin{itemize}
    \item $q_{i}^{-1}(x) \eqdef \sup \left\{t \mid q_{i}(t) \le x \right\} $;
    \item $\tilde{p}_i(x) \eqdef p_{i}(q_{i}^{-1}(x))$ and $\tilde{g}_i(x) \eqdef g (q_{i}^{-1}(x))$. 
\end{itemize}
Then, 
\[
\int_0^1 f_i(t) \dd t = \int_{t=0}^1  \frac{\Gamma}{g(t)  \exp \left( \Gamma \cdot \int_{s=0}^t \frac{p_i(s)}{g(s)} \dd q_i(s)\right)}  \dd q_i(t) = \int_{0}^{q_i(1)} \frac{\Gamma}{\tilde{g}_i(x)  \exp \left( \Gamma \cdot \int_{0}^x \frac{\tilde{p}_i(y)}{\tilde{g}_i(y)} \dd y \right)}  \dd x~.
\]
\paragraph{Remark.}
If $q_i(t)$ is strictly monotonically increasing, our definition of $q_{i}^{-1}(x)$ is the standard inverse function of $q_i(t)$. 
The above change of variables works for arbitrary absolute continuous non-decreasing function $q_i(t)$. Indeed, for any Lebesgue measurable function $h \ge 0$ and $a\le b$, 
\[
\int_{a}^{b} h(t) q_i'(t) \dd t = \int_{a}^{b} h(q_i^{-1}(q_i(t))) q_i'(t) \dd t = \int_{q_i(a)}^{q_i(b)} h(q_i^{-1}(x)) \dd x,
\]
where the first equation follows from the fact that the Lebesgue measure of $\{x \mid h(q_i^{-1}(q_i(t))) q_i'(t) \ne  h(t) q_i'(t)\}$ equals $0$.

\subsubsection{I.I.D.: Proof of Lemma~\ref{lem:ratio_iid}}
We start with the case of i.i.d. distributions.
Within this subsection, $\Gamma \approx 0.745$ is the unique solution to $\int_0^1 \frac{1}{y (1- \ln y) + 1/\Gamma -1} \dd y = 1$.
By symmetry, all functions $p_i(t)$ are the same. Since $\prod_i (1- p_i(t)) = 1-t$, we have that for all $i$
\[
p_{i}(t) = 1-(1-t)^{\frac{1}{n}} \quad \text{and} \quad q_{i}(t) = 1 - \prod_{j \ne i} (1 - p_j(t)) = 1-(1-t)^{\frac{n-1}{n}}
\]
Consequently, we have 
\begin{align*}
& \tilde{p_{i}}(x)=1-(1-q_i^{-1}(x))^{\frac{1}{n}} = 1 - (1-x)^{\frac{1}{n-1}} \\
\text{and} \quad & \tilde{g}_i(x)=\Gamma \cdot \left( \sum_j (1-q_j(q_i^{-1}(x))) \cdot p_j(q_i^{-1}(x)) - q_i^{-1}(x) \right) + 1 \\
& \phantom{\tilde{g}_i(x)} = \Gamma \cdot \left( \sum_j (1-x) \cdot \tilde{p}_i(x) - q_i^{-1}(x) \right) + 1 \tag{since $p_i,q_i$ are the same for all $i$}\\
& \phantom{\tilde{g}_i(x)} = \Gamma \cdot \left( n \left( 1-x -(1-x)^{\frac{n}{n-1}}\right) -1+ (1-x)^{\frac{n}{n-1}} \right)+1
\end{align*}

We have the following mathematical fact, whose proof involves tedious calculations that we defer to Appendix~\ref{app:missing}.
\begin{claim}
\label{clm:iid}
For any $x \in [0,1]$, we have
\begin{equation*}
\tilde{g}_i(x) \cdot \exp \left( \Gamma \cdot \int_{0}^x \frac{\tilde{p}_i(y)}{\tilde{g}_i(y)} \dd y \right) \ge \Gamma \cdot \left( -(1-x) \ln(1-x)-x\right)+1
\end{equation*}
\end{claim}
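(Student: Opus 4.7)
The plan is to define $\phi(x) \eqdef R(x)/L(x)$, where $L(x)$ denotes the left-hand side of the claim and $R(x) \eqdef \Gamma\bigl(-(1-x)\ln(1-x)-x\bigr)+1$ denotes the right-hand side, and then to show that $\phi$ is non-increasing on $[0,1]$. Since $\phi(0)=1$, this would yield $\phi(x)\le 1$ for all $x\in[0,1]$, which is precisely the claim.

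First I would establish the clean ODE $\tilde{g}_i'(x)=-\Gamma n\,\tilde{p}_i(x)$, which follows by direct differentiation using $\frac{d}{dx}(1-x)^{n/(n-1)}=-\frac{n}{n-1}(1-x)^{1/(n-1)}$. Plugging into the definition of $L$ gives
\[
(\ln L)'(x)=\frac{\tilde{g}_i'(x)+\Gamma\tilde{p}_i(x)}{\tilde{g}_i(x)}=-\Gamma(n-1)\cdot\frac{\tilde{p}_i(x)}{\tilde{g}_i(x)}.
\]
Combined with $R'(x)=\Gamma\ln(1-x)$, this yields $(\ln\phi)'(x)=\Gamma\ln(1-x)/R(x)+\Gamma(n-1)\tilde{p}_i(x)/\tilde{g}_i(x)$, so the problem reduces to the pointwise inequality
\[
(n-1)\tilde{p}_i(x)\cdot R(x)\ \le\ -\ln(1-x)\cdot\tilde{g}_i(x).
\]

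The crucial step is the substitution $s\eqdef(1-x)^{1/(n-1)}\in[0,1]$, under which $(n-1)\tilde{p}_i=(n-1)(1-s)$ and $-\ln(1-x)=-(n-1)\ln s$, so the target becomes $(1-s)R\le(-\ln s)\tilde{g}_i$. Setting $A\eqdef 1-\Gamma+\Gamma s^{n-1}$ and $C\eqdef\Gamma(n-1)s^{n-1}$, a direct calculation on the explicit formulas yields the parallel decompositions
\[
\tilde{g}_i=A+C(1-s),\qquad R=A-C\ln s,
\]
so $\tilde{g}_i$ and $R$ share the same base $A$ and differ only in whether the coefficient of $C$ is $(1-s)$ or $(-\ln s)$. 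Substituting these expressions, the $C$-cross-terms cancel exactly and one obtains
\[
(-\ln s)\tilde{g}_i-(1-s)R=A\cdot(s-1-\ln s),
\]
which is non-negative since $A\ge 1-\Gamma>0$ and $\ln s\le s-1$ on $(0,1]$.

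The main obstacle is spotting the parallel affine structure of $\tilde{g}_i$ and $R$ in the variable $s=(1-x)^{1/(n-1)}$; once this is identified, the remaining computation is purely mechanical. A more direct attack on $L\ge R$ via termwise estimates does not appear to succeed, because the two natural bounds $(n-1)\tilde{p}_i\le-\ln(1-x)$ and $\tilde{g}_i\le R$ point in opposite directions and cannot be combined without the exact cancellation above.
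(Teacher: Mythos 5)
Your proof is correct and follows essentially the same route as the paper's: both show monotonicity of the log-ratio of the two sides starting from the value $1$ at $x=0$, reduce to the identical pointwise inequality $(n-1)\tilde{p}_i\cdot R \le -\ln(1-x)\cdot\tilde{g}_i$, and close it with $\ln s\le s-1$. Your affine decomposition $\tilde{g}_i=A+C(1-s)$, $R=A-C\ln s$ is just a transparent repackaging of the paper's factorization of the combined numerator as $(1-\Gamma x)\bigl(-\ln(1-x)+(n-1)((1-x)^{1/(n-1)}-1)\bigr)$, since $A=1-\Gamma x$.
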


Applying the above claim and recalling that $q_i(1)=1$, we have that
\begin{align*}
\int_0^1 f_{i}(t) \dd t & =  \int_0^{1}  \frac{\Gamma}{\tilde{g}_i(x) \cdot \exp \left( \Gamma \cdot \int_{0}^x \frac{\tilde{p}_i(y)}{\tilde{g}_i(y)} \dd y \right)} \dd x \le \int_0^{1}  \frac{\Gamma }{ \Gamma \cdot \left( -( 1-x) \ln(1-x)-x\right)+1} \dd x~ \\
& = \int_0^1 \frac{\Gamma}{\Gamma \cdot \left( -y \cdot \ln y - (1-y) \right) + 1} \dd y \tag{let $y=1-x$}\\
& = \int_0^1 \frac{1}{y (1-\ln y) + \frac{1}{\Gamma} - 1} \dd y = 1,
\end{align*}
where the last equality follows from the definition of $\Gamma$.

\subsubsection{Non-I.I.D.: Proof of Lemma~\ref{lem:ratio}}
Finally, we derive the constant $\Gamma$ for the non-i.i.d. case. In contrast to the analysis for the i.i.d. case, we no longer have explicit expressions for functions $p_i(t),q_i(t)$. The challenge is to prove that for all possible $p_i(t), q_i(t)$, the stated inequality holds.
Within this subsection, $\Gamma=\frac{\ln \alpha+1}{\ln \alpha+1-\alpha}  \approx 0.725$ and $\alpha \approx 0.211 $ is the unique solution of the following equation on $(0,1)$
\[
\int_{\alpha}^{1}  \frac{\ln \alpha+1}{ (\ln \alpha+1) \left( - x \cdot \ln x + x\right) -\alpha} \dd x +\frac{1}{\ln \alpha}=0.
\]

We first observe the following property regarding functions $\tilde{p}_i(x)$ and $\tilde{g}_i(x)$. 
\begin{claim}
\label{clm:gp}
For each $x \in [0, q_i(1)]$, we have
\begin{equation}
\label{eqn:gp}
    \tilde{g}_i(x) \ge \Gamma \cdot \left( - (1-x) \cdot \ln(1-x) \cdot (1-\tilde{p}_i(x)) - x \right) + 1
\end{equation}
\end{claim}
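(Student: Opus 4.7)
The plan is to translate the inequality back to the variable $t$ via $t = q_i^{-1}(x)$, so that $x = q_i(t)$, $\tilde{p}_i(x) = p_i(t)$, and $\tilde{g}_i(x) = g(t)$. Expanding the definition of $g$ and cancelling the common constant $1$, the claim reduces to
\begin{equation*}
\sum_j (1-q_j(t))\,p_j(t) \;-\; t \;+\; q_i(t) \;\ge\; -\,(1-q_i(t))(1-p_i(t))\,\ln\!\bigl(1-q_i(t)\bigr).
\end{equation*}

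At this point I would invoke the identity $(1-p_i(t))(1-q_i(t)) = 1-t$, which is immediate from \eqref{eqn:pt} and \eqref{eqn:qt}, to collapse the right-hand side to $-(1-t)\ln(1-q_i(t))$. For the left-hand side I would rewrite each summand as $(1-q_j(t))\,p_j(t) = p_j(t)\prod_{k\ne j}(1-p_k(t))$; assuming temporarily that $p_j(t) < 1$ for all $j$ and abbreviating $A = 1-t$ and $B = 1-q_i(t)$ (so $A = B(1-p_i(t))$), this sum equals $A \sum_j \tfrac{p_j(t)}{1-p_j(t)}$, while $q_i(t) - t = A - B$. Dividing through by $A > 0$ and using $B/A = 1/(1-p_i(t))$, the claim reduces to
\begin{equation*}
\sum_{j\ne i}\frac{p_j(t)}{1-p_j(t)} \;\ge\; -\ln B \;=\; -\sum_{j\ne i}\ln\!\bigl(1-p_j(t)\bigr).
\end{equation*}

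It now suffices to prove the one-variable inequality $\tfrac{a}{1-a} \ge -\ln(1-a)$ for every $a \in [0,1)$, and then sum over $j \ne i$. Substituting $u = 1-a$ and setting $\phi(u) = \tfrac{1}{u} - 1 + \ln u$, this is equivalent to $\phi \ge 0$ on $(0,1]$; since $\phi(1) = 0$ and $\phi'(u) = (u-1)/u^2 \le 0$ on $(0,1]$, the function $\phi$ is nonincreasing on $(0,1]$, hence $\phi \ge 0$ there. The boundary situation where some $p_j(t)$ equals $1$ occurs only at $t=1$ and is handled by a direct check (both sides simplify to $1-\Gamma q_1(1)$ in the relevant subcase, using the convention $(1-x)\ln(1-x) \to 0$ as $x \to 1$) or by continuity from $t<1$.

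The main obstacle is purely bookkeeping: recognizing that the product identity $(1-p_i)(1-q_i) = 1-t$ lets one factor $1-t$ out of both sides, after which the seemingly multivariate inequality collapses to the elementary fact $\tfrac{a}{1-a} \ge -\ln(1-a)$. Once this reduction is spotted, no further analysis is needed, and in particular the choice of the constant $\Gamma$ plays no role in this claim — $\Gamma$ only enters in Lemma~\ref{lem:ratio} when this bound on $\tilde{g}_i$ is integrated against $1/\tilde{g}_i$.
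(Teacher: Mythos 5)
Your proposal is correct and follows essentially the same route as the paper's proof: both use the identities $(1-p_j(t))(1-q_j(t))=1-t$ and $\prod_{j\ne i}(1-p_j(t))=1-q_i(t)$ to reduce the claim to the termwise elementary inequality $\tfrac{p}{1-p}\ge -\ln(1-p)$, with $\Gamma$ cancelling out entirely. The only (cosmetic) differences are that you normalize by $1-t$ before comparing the sums, and that you explicitly treat the $t=1$ boundary, which the paper leaves implicit.
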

\begin{proof}
For notation simplicity, let $t = q_i^{-1}(x)$. 
Since $(1-q_i(t)) \cdot (1-p_i(t)) = 1-t$,  we have that $p_i(t) = 1 - \frac{1-t}{1-q_i(t)} = \frac{x-t}{1-x}$. Then,
\begin{align*}
\tilde{g}_i(x) & =\Gamma \cdot \left( \sum_j (1-q_j(t)) \cdot p_j(t) - t \right) + 1 = \Gamma \cdot \left( \sum_{j \ne i} (1-q_j(t)) \cdot p_j(t) + (1-q_i(t)) \cdot p_i(t) - t \right) + 1 \\
& = \Gamma \cdot \left( (1-t) \cdot \sum_{j\ne i} \frac{p_j(t)}{1-p_j(t)} + (1-x) \cdot \frac{x-t}{1-x} - t \right) + 1 \\
& \ge \Gamma \cdot \left( (1-t) \cdot \sum_{j\ne i} \ln \left( \frac{1}{1-p_j(t)} \right) - x \right) + 1 = \Gamma \cdot \left( (1-t) \cdot \ln \left( \frac{1}{\prod_{j \ne i} (1-p_j(t))} \right) - x \right) + 1 \\
& = \Gamma \cdot \left( (1-t) \cdot \ln \left( \frac{1}{1-q_i(t)} \right) - x \right) + 1 \\
& = \Gamma \cdot \left( - (1-q_i(t)) \cdot (1-p_i(t)) \cdot \ln \left(1-q_i(t) \right) - x \right) + 1 \\
& = \Gamma \cdot \left( - (1-x) \cdot \ln(1-x) \cdot (1-\tilde{p}_i(x)) - x \right) + 1
\end{align*}
Here, the third equality holds since $(1-q_j(t))\cdot (1-p_j(t)) = 1-t$ for all $j$; the inequality holds since $\frac{p}{1-p} \ge \ln \frac{1}{1-p}$ for all $p \in [0,1)$.
\end{proof}

Observe that functions $\tilde{p}_i(x), \tilde{g}_i(x)$ are only defined on $[0,q_i(1)]$. We further extend the two functions by defining $\tilde{p}_i(x) = 1$ and $\tilde{g}_i(x) = 1 - \Gamma \cdot x$ for $x \in (q_i(1),1]$. It is straightforward to verify that the extended functions satisfy \eqref{eqn:gp} for all $x \in [0,1]$.
This condition is the only property that we are going to use for functions $\tilde{p}_i$ and $\tilde{g}_i$. Specifically, we prove the following technical lemma.
\begin{lemma}
\label{lem:integral}
Suppose functions $\tilde{p}, \tilde{g}:[0,1] \to [0,1]$ satisfies that 
\[
\tilde{g}(x) \ge \Gamma \cdot \left( - (1-x) \cdot \ln(1-x) \cdot (1-\tilde{p}(x)) - x \right) + 1.
\]
Then
\[
\int_0^{1}  \frac{\Gamma}{\tilde{g}(x)  \exp \left( \Gamma \cdot \int_0^{x} \frac{ \tilde{p}(y)}{\tilde{g}(y)} \dd y\right)} \dd x \le 1.
\]
\end{lemma}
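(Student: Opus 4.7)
The plan is to reformulate the integral via a clean substitution, reduce to a linear optimization over a convex set of auxiliary profiles, and identify the extremal profile whose value matches $1/\Gamma$ via the defining equation of $\alpha$.

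\textbf{Setup and reduction.} I will set $u(x) := 1/\bigl(\tilde{g}(x)\exp(\Gamma \int_0^x \tilde{p}/\tilde{g}\, \dd y)\bigr)$ and $\rho(x) := \tilde{g}(x)\, u(x) = \exp(-\Gamma \int_0^x \tilde{p}/\tilde{g}\, \dd y)$. Then $\rho(0) = 1$, $\rho'(x) = -\Gamma \tilde{p}(x) u(x) \le 0$, and the bound $\tilde{p}(x) \le 1$ combined with the universal bound $\tilde{g}(x) \ge 1-\Gamma x$ (obtained by taking $\tilde{p}=1$ in the hypothesis) forces $\rho(x) \ge 1-\Gamma x$. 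Substituting $\tilde{p} = -\rho'/(\Gamma u)$ and $\tilde{g} = \rho/u$ into the hypothesis simplifies, with $\Psi(x) := -(1-x)\ln(1-x)$ and $M(x) := \Gamma(\Psi(x)-x)+1$, to
\[
u(x)\, M(x) \le \rho(x) - \Psi(x)\, \rho'(x).
\]
Dividing by $M$, integrating over $[0,1]$, and using integration by parts together with $\Psi(0)=\Psi(1)=0$ yields
\[
\int_0^1 u\, \dd x \le \int_0^1 \rho(x)\, \phi(x)\, \dd x, \qquad \phi(x) := \frac{1}{M(x)} + \left(\frac{\Psi(x)}{M(x)}\right)'.
\]

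\textbf{Extremal profile and computation.} Since $\int \rho\phi\,\dd x$ is linear in $\rho$ over the convex admissible set $\{\rho : \rho(0)=1,\ \rho \text{ non-increasing},\ \rho \ge 1-\Gamma x\}$, the maximum is attained at an extreme point where $\tilde{p}(x) \in \{0,1\}$ a.e. I claim this extremum is
\[
\rho^*(x) = \begin{cases} 1, & x \in [0,\, 1-\alpha], \\ (1-\Gamma x)/(1-\Gamma(1-\alpha)), & x \in (1-\alpha,\, 1], \end{cases}
\]
for which $u^*(x) = 1/M(x)$ on $[0,1-\alpha]$ and the constant $1/(1-\Gamma(1-\alpha))$ on $(1-\alpha,1]$. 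The change of variables $y=1-x$ and the shorthand $N(y) := -y\ln y + y + 1/\Gamma - 1$ turn $\int_0^{1-\alpha} \dd x/M(x)$ into $(1/\Gamma)\int_\alpha^1 \dd y/N(y)$. Invoking the algebraic relation $1-\Gamma+\Gamma\alpha = \alpha\ln\alpha/(\ln\alpha+1-\alpha)$ (a consequence of $\Gamma=(\ln\alpha+1)/(\ln\alpha+1-\alpha)$) together with the defining equation $\int_\alpha^1 \dd y/N(y) = -1/\ln\alpha$, a direct computation shows the two pieces sum to exactly $1/\Gamma$, giving $\int_0^1 u\,\dd x \le 1/\Gamma$ and hence the required bound.

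\textbf{Main obstacle.} The critical step is rigorously proving $\rho^*$ is the maximizer, in particular ruling out bang-bang profiles with more than one switch. My strategy relies on the key algebraic identity
\[
(1-\Gamma x)\, \phi(x) = 1 + \left[\frac{(1-\Gamma x)\Psi(x)}{M(x)}\right]',
\]
which follows from $(1-\Gamma x)+\Gamma\Psi(x)=M(x)$ and in particular gives $\int_0^1 (1-\Gamma x)\phi\,\dd x = 1$. Via this identity, the first-order condition $\int_{x_0}^1 (1-\Gamma x)\phi\,\dd x = 0$ at a candidate single switch point collapses to $1-x_0 = (1-\Gamma x_0)\Psi(x_0)/M(x_0)$, whose unique solution is $x_0 = 1-\alpha$. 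A complementary swap-style comparison — moving $\tilde{p}=0$ segments leftward and $\tilde{p}=1$ segments rightward while preserving continuity of $\rho$ — then shows that any multi-switch bang-bang profile can be transformed toward $\rho^*$ with a weak increase in $\int \rho\phi\,\dd x$, completing the optimality argument.
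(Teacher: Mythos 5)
Your reduction is attractive and the computational backbone is sound: the substitution $u,\rho$, the rewriting of the hypothesis as $uM \le \rho - \Psi\rho'$, the integration by parts giving $\int_0^1 u \le \int_0^1 \rho\phi$, the identity $(1-\Gamma x)\phi = 1 + \bigl[(1-\Gamma x)\Psi/M\bigr]'$, and the evaluation of the candidate profile $\rho^*$ to exactly $1/\Gamma$ via the defining equation of $\alpha$ all check out (your switch point $1-\alpha$ is precisely where the paper's comparison functions $H$ and $K$ cross). The problem is the optimization step, and it is not a cosmetic one. First, the admissible set you optimize over is wrong: the achievable profiles satisfy the \emph{pointwise} constraint $(\ln\rho)'(x) \ge -\Gamma/(1-\Gamma x)$ (equivalently $\rho(x_2)/\rho(x_1) \ge (1-\Gamma x_2)/(1-\Gamma x_1)$), which is strictly stronger than ``non-increasing and $\rho \ge 1-\Gamma x$.'' Over the larger set you wrote down, the claimed maximum is false: take $\rho = 1$ on $[0,0.9]$ and $\rho = 1-\Gamma x$ on $(0.9,1]$ (a legitimate non-increasing element of your set, with a jump). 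Using your own identity, $\int_0^1 \rho\phi = \int_0^{0.9}\tfrac{dx}{M} + \tfrac{\Psi(0.9)}{M(0.9)} + 0.1 - \tfrac{(1-0.9\Gamma)\Psi(0.9)}{M(0.9)} \approx 1.46 > 1/\Gamma \approx 1.38$. So the extreme-point analysis must be carried out over the correct set with the logarithmic-derivative constraint, whose extreme points are the continuous bang-bang profiles.

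Second, even over the correct set, the step that rules out multi-switch bang-bang profiles is only asserted. Your ``swap'' heuristic (move $\tilde p=0$ segments left, $\tilde p=1$ segments right) is not a routine rearrangement: $\phi$ changes sign, and the maximal decay rate $\Gamma/(1-\Gamma y)$ is position-dependent, so swapping a segment changes $\rho$ multiplicatively on everything downstream by a position-dependent factor. Verifying monotonicity of the objective under such swaps is essentially the entire analytic content of the lemma, and you have not supplied it; the single-switch first-order condition alone does not exclude a larger value at a multi-switch configuration. For comparison, the paper avoids this global optimization entirely: it works with the tail functional $G(z) = \int_z^1 \Gamma/\bigl(\tilde g\exp(\Gamma\int_z^x\tilde p/\tilde g)\bigr)\,dx$, derives the differential inequality $G' \ge \min\bigl(-\Gamma/M,\, -\Gamma(1-G)/(1-\Gamma z)\bigr)$, and runs an ODE comparison against $H$ and $K$ with a careful localization of the crossing point $z_0 \le z_1 = 1-\alpha$. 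If you want to salvage your route, you either need to prove the rearrangement inequality over the correct admissible set, or convert your pointwise constraint into a differential inequality for the running integral $\int_0^x \rho\phi$ and argue as the paper does — at which point the two proofs largely coincide.
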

We defer the proof of the lemma to Appendix~\ref{app:missing}. By applying it to the extended functions $\tilde{p}_i(x), \tilde{g}_i(x)$, we conclude the proof of Lemma~\ref{lem:ratio}:
\begin{align*}
& \phantom{=} (1-\Gamma \cdot q_{i}(1)) \cdot \left( 1 - \int_0^1 f_i(t) \dd t \right) \cdot \exp \left(\Gamma \int_0^1 \frac{q_i'(s) \cdot p_i(s)}{g(s)} \dd s\right) \\ 
& = (1- \Gamma \cdot q_{i}(1)) \cdot \left( 1- \int_0^{q_{i}(1)}  \frac{\Gamma}{\tilde{g}_i(x)  \exp \left( \Gamma \cdot \int_0^{x} \frac{ \tilde{p}_i(y)}{\tilde{g}_i(y)} \dd y\right)} \dd x \right) \exp {\left(\Gamma \int_0^{q_i(1)} \frac{ \tilde{p}_i(x)}{\tilde{g}_i(x)} \dd x\right)} \\
& \ge (1- \Gamma \cdot q_{i}(1)) \cdot \left( \int_{q_{i}(1)}^{1}  \frac{\Gamma}{\tilde{g}_i(x)  \exp \left( \Gamma \cdot \int_0^{x} \frac{ \tilde{p}_i(y)}{\tilde{g}_i(y)} \dd y\right)} \dd x \right) \exp {\left(\Gamma \int_0^{q_i(1)} \frac{ \tilde{p}_i(x)}{\tilde{g}_i(x)} \dd x\right)} \tag{by Lemma~\ref{lem:integral}}\\
& = (1- \Gamma \cdot q_{i}(1)) \cdot \int_{q_{i}(1)}^{1}  \frac{\Gamma}{(1-\Gamma \cdot x)  \exp \left( \Gamma \cdot \int_{q_i(1)}^{x} \frac{ 1}{1-\Gamma \cdot y} \dd y\right)} \dd x  \tag{by our extension of $\tilde{p}_i,\tilde{g}_i$}\\
& = (1- \Gamma \cdot q_{i}(1)) \cdot  \int_{q_{i}(1)}^{1}  \frac{\Gamma}{(1-\Gamma \cdot x)  \exp \left( \left. -\ln(1-\Gamma \cdot y) \right|_{q_i(1)}^{x} \right)} \dd x  \\
& = (1- \Gamma \cdot q_{i}(1)) \cdot \int_{q_{i}(1)}^{1}  \frac{\Gamma}{1-\Gamma \cdot q_i(1)} \dd x \\
& = \Gamma \cdot (1-q_i(1))~.
\end{align*}

\bibliography{prophet}
\bibliographystyle{plain}

\newpage

\appendix

\section{Informal Arguments}
\label{app:informal}
For readers who are familiar with the analysis of Correa, Saona, and Ziliotto~\cite{mp/CorreaSZ21}, or readers who have finished reading Section~\ref{sec:competitive}, we explain how the arrival time distributions are designed in this section. 
Specifically, we start with our analysis without specifying the distributions $F_i$.
\begin{align*}
\Pr[\alg \ge \tau(t)] & \ge 1 - \prod_{i} \left( 1 - \int_0^t p_i(t_i) f_i(t_i) \dd t_i \right) \\
& + \sum_{i} \left( p_i(t) \cdot \int_t^1 \prod_{j\ne i} \left( 1 - \int_0^{t_i} p_j(t_j) f_j(t_j) \dd t_j \right) f_i(t_i) \dd t_i \right)
\end{align*}
Here, the first line corresponds to the case when our algorithm stops before time $t$ (refer to Lemma~\ref{lem:before_t}) and the second line corresponds to the case when our algorithm stops after time $t$ (refer to Lemma~\ref{lem:after_t}).

We aim at designing functions $\{f_i\}$ so that the right hand side equals $\Gamma \cdot t$ for every $t \in [0,1]$. Denote the right hand side by $H(t)$. Then we have,
\begin{align*}
H'(t) & = \sum_{i} p_i(t) f_i(t) \cdot \prod_{j \ne i} \left( 1 - \int_0^t p_i(t_j) f_j(t_j) \dd t_j \right) - \sum_{i} p_i(t) f_i(t) \cdot \prod_{j \ne i} \left( 1 - \int_0^t p_j(t_j) f_j(t_j) \dd t_j \right) \\
& + \sum_{i}  \left( p_i'(t) \cdot \int_t^1  \prod_{j\ne i} \left( 1 - \int_0^{t_i} p_j(t_j) f_j(t_j)\ dd t_j \right) f_i(t_i) \dd t_i  \right)\\
& = \sum_{i}  \left( p_i'(t) \cdot \int_t^1  \prod_{j\ne i} \left( 1 - \int_0^{t_i} p_j(t_j) f_j(t_j) \dd t_j \right) f_i(t_i) \dd t_i  \right)
\end{align*}

Recall equation~\eqref{eqn:derivative_p}, $\sum_{i} p_i'(t) \cdot (1-q_i(t)) = 1$ for all $t$.
If we set $\{f_i\}$ to satisfy the following equations, we shall automatically have that $H'(t) = \Gamma$ for all $t \in [0,1]$. 
\begin{align*}
& \int_t^1  \prod_{j\ne i} \left( 1 - \int_0^{t_i} p_j(t_j) f_j(t_j) \dd t_j \right) f_i(t_i) \dd t_i = \Gamma \cdot (1-q_i(t)), \quad \forall i \in [n] \\
\iff & \prod_{j\ne i} \left( 1 - \int_0^{t_i} p_j(t_j) f_j(t_j) \dd t_j \right) f_i(t_i) = - \Gamma \cdot q_i'(t), \quad \forall i \in [n]
\end{align*}
We remark that this is only a sufficient but not necessary condition for our algorithm to work. We then solve the above set of differential equations and achieve our constructions of $\{f_i\}$.

\section{Missing Proofs}
\label{app:missing}
\subsection{Uniqueness of $\alpha$}
\begin{lemma*}
There exists a unique $\alpha \in (0,1)$, satisfying the following equation
\begin{equation*}
    \int_{\alpha}^{1}  \frac{\ln \alpha+1}{ (\ln \alpha+1) \left( - x \cdot \ln x + x\right) -\alpha} \dd x +\frac{1}{\ln \alpha}=0.
\end{equation*}
\end{lemma*}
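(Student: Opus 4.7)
The plan is to define
\[
\Phi(\alpha) \;:=\; \int_{\alpha}^{1} \frac{\ln\alpha + 1}{(\ln\alpha+1)(x - x\ln x) - \alpha} \,\dd x \;+\; \frac{1}{\ln \alpha}
\]
on $(0,1)$ and to show that $\Phi$ is well-defined, strictly decreasing, and changes sign. Uniqueness then follows immediately from strict monotonicity, and existence from the intermediate value theorem, so the entire argument reduces to one derivative computation.

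\textbf{Well-definedness.} I would first verify that $D(x) := (\ln\alpha+1)(x - x\ln x) - \alpha$ is strictly negative for all $x \in [\alpha,1]$ and all $\alpha \in (0,1)$, which is what ensures the integrand is a smooth function of $(x,\alpha)$ with no singularity inside the integration region. For $\alpha \le 1/e$ this is immediate, since $\ln\alpha + 1 \le 0$ while $x - x\ln x > 0$ on $(0,1)$. For $\alpha > 1/e$, the function $x - x\ln x$ is increasing on $(0,1)$ and bounded above by $1$, so $D(x) \le \ln\alpha + 1 - \alpha$, and $\alpha \mapsto \ln\alpha + 1 - \alpha$ is strictly negative on $(0,1)$ because it vanishes at $\alpha=1$ with positive derivative $1/\alpha - 1$.

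\textbf{Monotonicity.} Writing $A := \ln\alpha + 1$ and $B(x) := x - x\ln x$, the key observation is a short cancellation
\[
\frac{\partial}{\partial \alpha}\!\left[\frac{A}{AB(x) - \alpha}\right] \;=\; \frac{A' \,(AB - \alpha) - A\,(A' B - 1)}{(AB-\alpha)^2} \;=\; \frac{A - 1}{D(x)^2} \;=\; \frac{\ln\alpha}{D(x)^2},
\]
where we used $A' = 1/\alpha$ so that $A'\alpha = 1$. Combining this with Leibniz's rule, the boundary value $D(\alpha) = -\alpha \ln^2\alpha$, and $\frac{\dd}{\dd \alpha}(1/\ln\alpha) = -1/(\alpha \ln^2 \alpha)$, the terms proportional to $(\ln\alpha + 1)/(\alpha \ln^2\alpha)$ cancel and I am left with
\[
\Phi'(\alpha) \;=\; \frac{1}{\alpha \ln\alpha} \;+\; \ln\alpha \int_{\alpha}^{1} \frac{\dd x}{D(x)^2}.
\]
Since $\ln\alpha < 0$ on $(0,1)$, both terms are strictly negative (the integral is finite because $D$ is continuous and nonvanishing on $[\alpha,1]$), so $\Phi$ is strictly decreasing on $(0,1)$.

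\textbf{Endpoint behavior and conclusion.} It remains to exhibit one point where $\Phi$ is negative and one where it is positive. Taking $\alpha = 1/e$ kills the numerator of the integrand, giving $\Phi(1/e) = -1$. For $\alpha \to 0^+$, I rewrite the integrand as $1/\bigl(B(x) - \alpha/(\ln\alpha + 1)\bigr)$, which converges monotonically from below to $1/B(x) = 1/(x(1-\ln x))$; the substitution $u = 1 - \ln x$ gives $\int_\alpha^1 \dd x/(x(1-\ln x)) = \ln(1 - \ln\alpha)$, which diverges as $\alpha \to 0^+$, while $1/\ln\alpha \to 0^-$. Hence $\Phi(\alpha) \to +\infty$, and combining with strict monotonicity produces a unique $\alpha \in (0, 1/e)$ with $\Phi(\alpha) = 0$. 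The main obstacle in this proof is spotting the simple form $\partial_\alpha[A/(AB-\alpha)] = \ln\alpha/D(x)^2$; without that identity, $\Phi'(\alpha)$ looks unwieldy and both the signing step and uniqueness become hard to extract.
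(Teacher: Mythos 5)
Your proposal is correct and follows essentially the same route as the paper: both define the function of $\alpha$, differentiate under the integral sign to obtain exactly $\Phi'(\alpha)=\frac{1}{\alpha\ln\alpha}+\ln\alpha\int_\alpha^1 D(x)^{-2}\,\dd x<0$, and conclude by monotonicity plus a sign change. You additionally supply details the paper only asserts (nonvanishing of the denominator, $\Phi(1/e)=-1$, and the divergence at $0^+$), which is a welcome strengthening rather than a departure.
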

\begin{proof}
Denote
\[
Y(z) \eqdef \int_{z}^{1}  \frac{\ln z+1}{ (\ln z+1) \left( - x \cdot \ln x + x\right) -z} \dd x +\frac{1}{\ln z}.
\]
We calculate the derivative of $Y(z)$.
\begin{align*}
Y'(z) & = \int_{z}^{1} \frac{\frac{1}{z}((\ln z +1)(-x\ln x +x)-z)-(\ln z +1)\left(\frac{-x\ln x+x}{z}-1\right)}{((\ln z+1)(-x\ln x+x)-z)^{2}} \dd x -\frac{\ln z +1}{(\ln z +1)(-z \ln z+z)-z}\\
& \phantom{=} -\frac{1}{z(\ln z)^{2}}\\
& = \frac{1}{z \ln z}+ \int_{z}^{1} \frac{\ln z}{((\ln z+1)(-x\ln x+x)-z)^{2}} \dd x < 0,
\end{align*}
where the last inequality holds by $\ln z < 0$ for $z \in (0,1)$. Therefore, we obtain the monotonicity of $Y(z)$ on $(0,1)$. Since $Y(0^{+}) > 0$ and $Y(1^{-}) < 0$, there exists the unique $\alpha \in (0,1)$ such that $Y(\alpha) = 0$. Moreover, our numerical result shows that $\alpha \approx 0.211$.
\end{proof}

\subsection{Proof of Claim~\ref{clm:iid}} 
We restate the claim as the following.
\begin{claim*}
For any $x \in [0,1]$, we have
\begin{equation*}
\tilde{g}_i(x) \cdot \exp \left( \Gamma \cdot \int_{0}^x \frac{\tilde{p}_i(y)}{\tilde{g}_i(y)} \dd y \right) \ge \Gamma \cdot \left( -(1-x) \ln(1-x)-x\right)+1~,
\end{equation*}
where $\tilde{p_{i}}(x) = 1 - (1-x)^{\frac{1}{n-1}}$ and $\tilde{g}_i(x) = \Gamma \cdot \left(-1+ (1-x)^{\frac{n}{n-1}} \right)+1$.

\end{claim*}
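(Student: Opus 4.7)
The plan is to reduce the inequality to a comparison of logarithmic derivatives and then to a clean algebraic identity. Let $L(x)$ denote the left-hand side and $R(x) := \Gamma(-(1-x)\ln(1-x) - x) + 1$ the right-hand side. One checks directly that $\tilde g_i(0) = 1$, so $L(0) = R(0) = 1$; since $L, R > 0$ on $[0,1]$, it suffices to prove $(\log L)'(x) \ge (\log R)'(x)$ throughout.

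The first step is to exploit the identity $\tilde g_i'(x) = -\Gamma n\, \tilde p_i(x)$, which follows by differentiating the explicit formula for $\tilde g_i$ from Section~\ref{sec:ratio}. This gives $(\log L)'(x) = \tilde g_i'(x)/\tilde g_i(x) + \Gamma \tilde p_i(x)/\tilde g_i(x) = -\Gamma(n-1)\tilde p_i(x)/\tilde g_i(x)$. A direct computation yields $R'(x) = \Gamma\ln(1-x)$, so $(\log R)'(x) = \Gamma\ln(1-x)/R(x)$. Clearing denominators turns the desired inequality into
\begin{equation*}
(n-1)\tilde p_i(x)\cdot R(x) \;\le\; (-\ln(1-x))\cdot \tilde g_i(x).
\end{equation*}

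The main step is to prove this via the substitution $\beta := -\ln(1-x)$ and $a := \beta/(n-1)$, under which $\tilde p_i(x) = 1-e^{-a}$ and $(1-x)^{n/(n-1)} = e^{-\beta-a}$. Rewriting $\tilde g_i(x)$ and $R(x)$ in terms of $(a,\beta)$ and collecting terms, I expect the clean factorization
\begin{equation*}
a\, \tilde g_i(x) - (1-e^{-a})\, R(x) \;=\; (a - 1 + e^{-a})(1 - \Gamma x),
\end{equation*}
which is exactly the target inequality, up to the positive factor $(n-1)/\beta = 1/a$. Both factors on the right are non-negative: $a - 1 + e^{-a} \ge 0$ by the elementary bound $e^{-a} \ge 1-a$, and $1 - \Gamma x \ge 1-\Gamma > 0$ since $\Gamma < 1$. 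This completes the proof. The main obstacle is the algebraic manipulation leading to the factorization: after substitution the $\beta$-dependent pieces of $a\, \tilde g_i$ and $(1-e^{-a})\, R$ must cancel exactly, leaving a product of two manifestly non-negative quantities. Identifying this clean factorization is what makes the bound tight; everything else is routine.
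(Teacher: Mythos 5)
Your proof is correct and follows essentially the same route as the paper's: both reduce the claim to showing that $\log L - \log R$ is nondecreasing from the common value at $x=0$, and both hinge on the same factorization of $-\ln(1-x)\,\tilde g_i(x) - (n-1)\tilde p_i(x) R(x)$ into $(1-\Gamma x)$ times $(n-1)(a-1+e^{-a})$, finished off by $e^{-a}\ge 1-a$. Your claimed identity does check out exactly (the $\beta$-dependent terms cancel as you anticipated), so the argument is complete; note only that you correctly use the full expression for $\tilde g_i$ from Section 3.3, whereas the claim's restatement in the appendix drops the $n\bigl(1-x-(1-x)^{n/(n-1)}\bigr)$ term by an apparent typo.
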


\begin{proof}
Let $h(x) \eqdef \Gamma \cdot \left( -(1-x) \ln(1-x)-x\right)+1$.
We first calculate the derivatives of $\tilde{g}_i(x)$ and $h(x)$:
\begin{itemize}
    \item $\tilde{g}_i'(x) = \Gamma \cdot \left( n \left( -1 + \frac{n}{n-1} (1-x)^{\frac{1}{n-1}} \right) - \frac{n}{n-1} (1-x)^{\frac{1}{n-1}} \right) = \Gamma \cdot n \cdot \left( (1-x)^{\frac{1}{n-1}} - 1 \right)$
    \item $h'(x) = \Gamma \cdot ( \ln(1-x) + \frac{1-x}{1-x} - 1) = \Gamma \cdot \ln(1-x)$.
\end{itemize}
Let $I(x) \eqdef \Gamma \cdot \int_0^x \frac{\tilde{p}_i(y)}{\tilde{g}_i(y)} \dd y - \ln h(x) + \ln \tilde{g}_i(x)$. It suffices to show that $I(x) \ge 0$. We calculate the derivative of $I(x)$.
\begin{align*}
I'(x) & = \Gamma \cdot \frac{\tilde{p}_i(x)}{\tilde{g}_i(x)} - \frac{h'(x)}{h(x)} + \frac{\tilde{g}_i'(x)}{\tilde{g}_i(x)} \\
& = \frac{\Gamma \left(1-(1-x)^{\frac{1}{n-1}} \right) + \Gamma \cdot n \cdot \left( (1-x)^{\frac{1}{n-1}} - 1 \right)}{\tilde{g}_i(x)} - \frac{\Gamma \cdot \ln(x)}{h(x)}\\
& = \Gamma \cdot \left( \frac{(n-1) \left((1-x)^{\frac{1}{n-1}} - 1 \right) }{ \tilde{g}_i(x)} - \frac{\ln(1-x)}{h(x)} \right) \\
& = \Gamma \cdot \Bigg( \frac{(n-1) \left((1-x)^{\frac{1}{n-1}} - 1 \right) \cdot \left( \Gamma \cdot \left( -(1-x) \ln(1-x)-x\right)+1 \right)}{\tilde{g}_i(x) \cdot h(x)} \\
& \phantom{= \Gamma \cdot \Bigg(} - \frac{   \ln(1-x) \cdot \left( \Gamma \cdot \left( n \left( 1-x -(1-x)^{\frac{n}{n-1}}\right) -1+ (1-x)^{\frac{n}{n-1}} \right)+1 \right)}{\tilde{g}_i(x) \cdot h(x)} \Bigg) \\
& = \frac{\Gamma \cdot \left( 1-\Gamma x\right) \cdot \left( -\ln(1-x) + (n-1) ((1-x)^{\frac{1}{n-1}}-1) \right)}{\tilde{g}_i(x) \cdot h(x)} 
\end{align*}
Furthermore, by applying $e^z \ge z + 1$ to $z = \frac{\ln(1-x)}{n-1}$, we have
\[
(n-1) ((1-x)^{\frac{1}{n-1}}-1) = (n-1) (e^{\frac{\ln(1-x)}{n-1}}-1) \ge (n-1) \cdot \frac{\ln(1-x)}{n-1} = \ln(1-x).
\]
Hence, $I'(x) \ge 0$. As a result, we have $I(x) \ge I(0) = \Gamma \cdot 0 - \ln h(0) + \ln \tilde{g}_i(0) = 0$, which concludes the proof of the claim.
\end{proof}

\subsection{Proof of Lemma~\ref{lem:integral}}
Recall the statement of the lemma. Here $\Gamma=\frac{\ln \alpha+1}{\ln \alpha+1-\alpha} \approx 0.725$ and $\alpha \approx 0.211 $ is the unique solution of the following equation on $(0,1)$
\[
\int_{\alpha}^{1}  \frac{\ln \alpha+1}{ (\ln \alpha+1) \left( - x \cdot \ln x + x\right) -\alpha} \dd x +\frac{1}{\ln \alpha}=0.
\]
\begin{lemma*}
Suppose functions $\tilde{p}, \tilde{g}:[0,1] \to [0,1]$ satisfies that 
\[
\tilde{g}(x) \ge \Gamma \cdot \left( - (1-x) \cdot \ln(1-x) \cdot (1-\tilde{p}(x)) - x \right) + 1.
\]
Then
\[
\int_0^{1}  \frac{\Gamma}{\tilde{g}(x)  \exp \left( \Gamma \cdot \int_0^{x} \frac{ \tilde{p}(y)}{\tilde{g}(y)} \dd y\right)} \dd x \le 1.
\]
\end{lemma*}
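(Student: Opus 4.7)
The plan is threefold: (i) reduce to the case where the hypothesis holds with equality, (ii) transform the integral via integration by parts so that the cost is linear in $e^{-v}$ for a scalar function $v$, and (iii) identify the worst case as a bang-bang step function whose value matches $1$ by the defining equation for $\alpha$.

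For (i), write $L(x, \tilde p) := 1 - \Gamma x + \Gamma \psi(x)(1-\tilde p)$ with $\psi(x) := -(1-x)\ln(1-x)$, and $A(x) := L(x, 0)$. If $\tilde g(x) > L(x, \tilde p(x))$ at some $x$, I can either lower $\tilde p(x)$ (if $\tilde p(x) > 0$) until equality is restored --- this shrinks the exponent $\Gamma \int_0^x \tilde p/\tilde g$ and hence $\mathcal G(x) := \tilde g(x) \exp(\Gamma \int_0^x \tilde p/\tilde g)$ --- or lower $\tilde g(x)$ to $A(x)$ (if $\tilde p(x) = 0$), which shrinks $\mathcal G(x)$ without affecting the exponent. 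Either operation only increases $\Gamma / \mathcal G$, so WLOG $\tilde g = L(\cdot, \tilde p)$. Next, for (ii), put $v(x) := \Gamma \int_0^x \tilde p(y)/\tilde g(y)\, \dd y$. At equality $\tilde g(x) = A(x)/(1 + \psi(x) v'(x))$, so the integrand equals $\Gamma e^{-v}(1 + \psi v')/A$. Integrating the $v'$-term by parts, without boundary contributions since $\psi(0) = \psi(1) = 0$, yields
\[
\int_0^1 \frac{\Gamma}{\mathcal G(x)}\, \dd x \;=\; \Gamma \int_0^1 M(x)\, e^{-v(x)}\, \dd x, \qquad M(x) := \frac{1}{A(x)} + \frac{\dd}{\dd x}\!\left(\frac{\psi(x)}{A(x)}\right).
\]
The task reduces to maximizing $\int M e^{-v}$ over the admissible set $\{v : v(0) = 0,\ 0 \le v'(x) \le \Gamma/(1 - \Gamma x)\}$, with the upper bound coming from $\tilde p \le 1$ under equality.

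For (iii), Pontryagin's maximum principle applied to this linear-in-$e^{-v}$ cost indicates that the worst $v$ is bang-bang with a single switch $x^*$: $v^* = 0$ on $[0, x^*]$ and $v^*(x) = \ln((1 - \Gamma x^*)/(1 - \Gamma x))$ on $(x^*, 1]$, corresponding to $\tilde p \equiv 0$ on the first interval and $\tilde p \equiv 1$ on the second. Under this $v^*$, $\mathcal G = A$ on $[0, x^*]$ and $\mathcal G \equiv 1 - \Gamma x^*$ on $(x^*, 1]$, so the integral equals $I(x^*) := \int_0^{x^*} \Gamma/A(x)\, \dd x + \Gamma(1-x^*)/(1-\Gamma x^*)$. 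Setting $I'(x^*) = 0$ yields the critical condition $A(x^*)(1 - \Gamma) = (1 - \Gamma x^*)^2$, which (after substituting $\Gamma = (\ln \alpha + 1)/(\ln \alpha + 1 - \alpha)$) is algebraically solved by $x^* = 1 - \alpha$. The change of variable $y = 1 - x$ reduces $\int_0^{1-\alpha} \Gamma/A(x)\, \dd x$ to $\int_\alpha^1 (\ln \alpha + 1)/[(\ln \alpha + 1)(-y \ln y + y) - \alpha]\, \dd y = -1/\ln \alpha$ by the defining equation for $\alpha$, while the second term equals $1 + 1/\ln \alpha$; the sum is exactly $1$.

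The main obstacle is the global-optimality claim in (iii): since $M$ changes sign on $(0,1)$, the functional $T(v) := \Gamma \int M e^{-v}$ is not globally concave in $v$ and Pontryagin's condition is only necessary. To close the gap, I would verify directly that the adjoint $N(x) := \int_x^1 M(y)\, e^{-v^*(y)}\, \dd y$ satisfies $N \ge 0$ on $[0, x^*]$ (using that $M > 0$ there, since the sign change of $M$ lies strictly to the right of $x^*$ for $\Gamma \approx 0.725$) and $N \le 0$ on $[x^*, 1]$ (by tracking $N'(y) = -M(y)(1-\Gamma y)/(1-\Gamma x^*)$ with the boundary values $N(x^*) = N(1) = 0$). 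Since the Hamiltonian is linear in the control $v'$ and the switching function has an isolated zero at $x^*$, Pontryagin's condition becomes sufficient, giving global optimality of the bang-bang $v^*$ and completing the proof.
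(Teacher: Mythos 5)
Your reductions (i) and (ii) are sound, and the extremal computation in (iii) is consistent with the paper: your switching condition $A(x^*)(1-\Gamma)=(1-\Gamma x^*)^2$ is algebraically equivalent to the paper's equation $(1-\Gamma)\ln(1-z_1)+1-\Gamma z_1=0$ defining $z_1=1-\alpha$, and your weight $M=\tfrac1A+(\psi/A)'$ satisfies $\Gamma M=L'$ for the paper's auxiliary function $L$, so your sign claims for $M$ (positive on $[0,x^*]$, a single sign change strictly inside $(x^*,1)$) do follow from the paper's computation that the numerator $l(z)$ of $L'$ is decreasing with $l(z_1)=(1-\Gamma z_1)^2/(1-\Gamma)>0$; one can also check $N(x^*)=0$ reduces to the same equation for $x^*$. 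The genuine gap is the final sentence of (iii). Linearity of the Hamiltonian in the control together with an isolated zero of the switching function does \emph{not} make Pontryagin's conditions sufficient; sufficiency needs concavity of the maximized Hamiltonian in the state, and here $\mathcal H^*(v,\lambda,x)=\Gamma M(x)e^{-v}+\max\left(0,\lambda\, u_{\max}(x)\right)$ is convex, not concave, in $v$ wherever $M(x)>0$ --- precisely the region $[0,x^*]$ where you need the certificate. As written, your argument establishes only that $v^*$ is a Pontryagin extremal, not a global maximizer.

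The gap is repairable, because the problem becomes genuinely concave after substituting $w=e^{-v}$: the objective $\Gamma\int_0^1 Mw$ is linear in $w$ and the feasible set $\{w: w(0)=1,\ -u_{\max}w\le w'\le 0\}$ is convex, so a multiplier certificate suffices. Concretely, set $\phi(x)=N(x)$ on $[0,x^*]$ and $\phi(x)=(1-\Gamma x^*)N(x)/(1-\Gamma x)$ on $[x^*,1]$; then $\phi(1)=0$, $\phi'=-M$ where $\phi\ge0$, and $\phi'-u_{\max}\phi=-M$ where $\phi\le0$, and integration by parts gives, for every feasible $w$, $\int_0^1 Mw\le \phi(0)=N(0)=\int_0^1 Mw^*$, using $w'\le0$ on the set where $\phi\ge0$ and $w'\ge -u_{\max}w$ on the set where $\phi\le0$. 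So the sign conditions on $N$ you propose to verify are exactly the right certificate --- but "sufficient" must be justified through this linear-in-$w$ reformulation (or the equivalent duality computation), not through the stated general principle. For comparison, the paper sidesteps the variational problem entirely: it derives the differential inequality $G'(z)\ge\min(\cdot,\cdot)$ for the tail integral and runs a direct comparison against the curves $H$ and $K$. Your route exposes the extremal (bang-bang) structure more transparently, at the price of having to supply the global-optimality certificate above.
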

\begin{proof}
Fix arbitrary functions $\tilde{p}(x),\tilde{g}(x)$ that satisfy the stated condition. Define $G:[0,1] \to [0,\infty)$ as the following
\[
G(z) \eqdef \int_z^{1}  \frac{\Gamma}{\tilde{g}(x)  \exp \left( \Gamma \cdot \int_z^{x} \frac{ \tilde{p}(y)}{\tilde{g}(y)} \dd y\right)} \dd x.
\]
Then it is equivalent to prove that $G(0) \le 1$. Taking the derivative of $G(z)$, we have
\begin{multline}
\label{eqn:G}
G'(z) = -\frac{\Gamma}{\tilde{g}(z)} + \int_z^{1}  \frac{\Gamma}{\tilde{g}(x)  \exp \left( \Gamma \cdot \int_z^{x} \frac{ \tilde{p}(y)}{\tilde{g}(y)} \dd y\right)} \cdot \frac{\Gamma \cdot \tilde{p}(z)}{\tilde{g}(z)} \dd x = \frac{\Gamma}{\tilde{g}(z)} \cdot (\tilde{p}(z) \cdot G(z) - 1) \\
\ge \min \left( \frac{-\Gamma}{\Gamma \cdot \left( - (1-z) \cdot \ln(1-z) - z \right) +1}, \frac{-\Gamma \cdot (1-G(z))}{1-\Gamma \cdot z} \right),
\end{multline}
where the inequality follows from the stated condition of $\tilde{p}(z),\tilde{g}(z)$ and the fact that the minimum must be achieved when $\tilde{p}(z) = 0$ or $1$.

Next, we show that for any continuous function $G:[0,1] \to [0,\infty)$ satisfying equation~\eqref{eqn:G} and $G(1) = 0$, we have $G(0) \le 1$. 

Define two auxiliary functions.
\begin{itemize}
\item $H(z) \eqdef \frac{\Gamma \left( -(1-z) \cdot \ln(1-z)\right)}{\Gamma \cdot \left( -(1-z) \cdot \ln(1-z) -z\right)+1}$. This is the function with $\frac{-\Gamma}{\Gamma \cdot \left( - (1-z) \cdot \ln(1-z) - z \right) +1} = \frac{-\Gamma \cdot (1-H(z))}{1-\Gamma \cdot z}$. 
Specifically, by equation~\eqref{eqn:G}, we shall use $G'(z) \ge \frac{-\Gamma \cdot (1-G(z))}{1-\Gamma \cdot z} $ when $G(z) \le H(z)$; and use $G'(z) \ge \frac{-\Gamma}{\Gamma \cdot \left( - (1-z) \cdot \ln(1-z) - z \right) +1} $ when $G(z) > H(z)$.
\item $K(z) \eqdef \frac{\Gamma \cdot (1-z)}{1-\Gamma \cdot z}$ for $z \in [0,1]$. This is the solution to $K'(z) = \frac{-\Gamma \cdot (1-K(z))}{1-\Gamma \cdot z}$. 
\end{itemize}

Let $z_{1} $ be the root of $K(z) = H(z)$ on $[0,1)$. We have
\begin{align*}
\frac{\Gamma \left( -(1-z_1) \cdot \ln(1-z_1) \right)}{\Gamma \cdot \left( -(1-z_1) \cdot \ln(1-z_1) -z_1 \right)+1} = \frac{\Gamma \cdot (1-z_1)}{1-\Gamma \cdot z_1} & \Longrightarrow (1-\Gamma) \cdot \ln(1-z_1)+ 1 - \Gamma \cdot z_1=0 \\
& \Longleftrightarrow \Gamma = \frac{\ln (1-z_1) + 1}{\ln (1-z_1) + z_1}
\end{align*}
Recall that the constant $\Gamma = \frac{\ln \alpha+1}{\ln \alpha+1-\alpha}$, we have $z_1 = 1-\alpha \approx 0.789$. 
Moreover, for $z \in [0,z_1)$, $H(z) < K(z)$; for $z \in (z_1,1)$, we have $H(z) > K(z)$. Refer to Figure~\ref{fig:H_K}.

\begin{figure}[ht]
    \centering
    \includegraphics[width=1.00\textwidth]{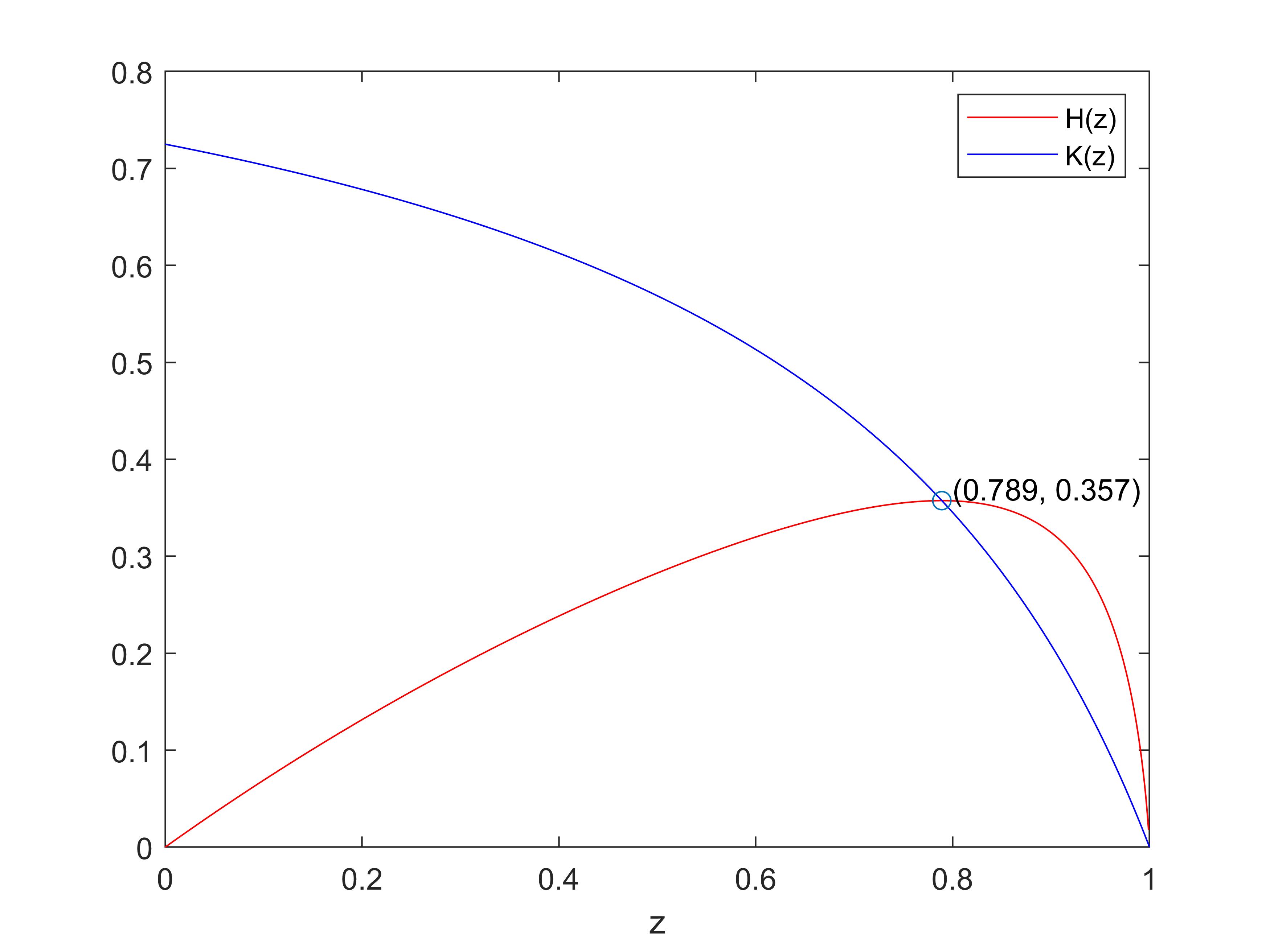}
    \caption{Plots of $H(z)$ and $K(z)$}
    \label{fig:H_K}
\end{figure}

Let $z_0 \eqdef \inf \left\{z \mid G(z) \le H(z), z \in [0,1] \right\}$. Note that $z_0$ is well-defined since $G(1) = 1 = H(1)$.

We claim that $z_{0} \le z_{1}$. It suffices to prove $G(z_{1}) \le H(z_{1})$ and we prove it by contradiction. 

Suppose otherwise $G(z_{1}) > H(z_{1})$. Notice that for sufficiently small $\epsilon > 0$, it holds that $G(z) < H(z)$ for $z \in [1-\epsilon,1)$. Because for $z \in [1-\epsilon,1)$, 
\begin{align*}
G'(z) - H'(z) & = G'(z)- \frac{\Gamma \cdot \left( (1-\Gamma) \cdot \ln(1-z) - \Gamma \cdot z+1\right)}{\left( \Gamma \cdot (-(1-z) \cdot \ln(1-z)-z)+1\right)^{2}} \\
& \ge \min \left( \frac{-\Gamma}{\Gamma \cdot \left( - (1-z) \cdot \ln(1-z) - z \right) +1}, \frac{-\Gamma \cdot (1-G(z))}{1-\Gamma \cdot z}\right) \\
& \phantom{=} -\frac{\Gamma \cdot \left( (1-\Gamma) \cdot \ln(1-z) - \Gamma \cdot z+1\right)}{\left( \Gamma \cdot (-(1-z) \cdot \ln(1-z)-z)+1\right)^{2}} \\
& \ge \frac{-\Gamma}{1- \Gamma \cdot z} -\frac{\Gamma \cdot \left( (1-\Gamma) \cdot \ln(1-z) - \Gamma \cdot z+1\right)}{\left( \Gamma \cdot (-(1-z) \cdot \ln(1-z)-z)+1\right)^{2}} > 0.
\end{align*}
The first inequality holds by equation~\eqref{eqn:G}; 
the second inequality holds by the fact that both
$\frac{-\Gamma}{\Gamma \cdot \left( - (1-z) \cdot \ln(1-z) - z \right) +1}$ and $\frac{-\Gamma \cdot (1-G(z))}{1-\Gamma \cdot z} $ are at least $\frac{-\Gamma}{1 - \Gamma z}$;
the last inequality holds since 
\[
\lim_{z \to 1^{-}} \left(\frac{-\Gamma}{1- \Gamma \cdot z} -\frac{\Gamma \cdot \left( (1-\Gamma) \cdot \ln(1-z) - \Gamma \cdot z+1\right)}{\left( \Gamma \cdot (-(1-z) \cdot \ln(1-z)-z)+1\right)^{2}} \right) = +\infty.
\]

Then for $z \in (1-\epsilon,1)$,
\[
G(z) -H(z) = G(1) - H(1)  - \int_{z}^{1} \left(G'(t) - H'(t) \right) \dd t = - \int_{z}^{1} \left(G'(t) - H'(t) \right) \dd t < 0.
\]

Let $z_{2} = \inf\{z \mid z \in (0,1], G(y) \le H(y) \text{ for } \forall y \in [z,1) \}$. According to the derivation above, we have $z_2 \le 1-\epsilon$. Moreover, since both $G,H$ are continuous functions, we have $G(z_2) = H(z_2)$.
Since $G(z_{1}) > H(z_{1})$, we have $z_{2} \in (z_{1},1-\epsilon]$. 
According to the definition of $z_2$ and $H(z)$, for every $z \in [z_2, 1]$,
\begin{align*}
& G'(z) \ge \min \left( \frac{-\Gamma}{\Gamma \cdot \left( - (1-z) \cdot \ln(1-z) - z \right) +1}, \frac{-\Gamma \cdot (1-G(z))}{1-\Gamma \cdot z}\right) = \frac{-\Gamma \cdot (1-G(z))}{1-\Gamma \cdot z} \\
& \Longrightarrow \left( (1 - \Gamma z) \cdot G(z) \right)' = (1-\Gamma z) \cdot G'(z) - \Gamma \cdot G(z) \ge - \Gamma \\
& \Longrightarrow (1-\Gamma) \cdot G(1) - (1-\Gamma z) \cdot G(z) \ge - \Gamma \cdot (1-z) \\
& \Longrightarrow G(z) \le \frac{\Gamma \cdot (1-z)}{1-\Gamma z} = K(z).
\end{align*}
Moreover, $K(z) < H(z)$ for $z \in (z_{1},1)$ by the definition of $z_1$. This implies that $G(z) < H(z)$ for $z \in [z_{2},1)$, that contradicts $G(z_{2})=H(z_{2})$. Therefore, we have $z_{0} \le z_{1} $.

For $z \in [0, z_{0})$, we have $G(z) > H(z)$, and, hence
\[
G'(z) \ge \min \left( \frac{-\Gamma}{\Gamma \cdot \left( - (1-z) \cdot \ln(1-z) - z \right) +1
}, \frac{-\Gamma \cdot (1-G(z))}{1-\Gamma \cdot z}\right) = \frac{-\Gamma}{\Gamma \cdot \left( - (1-z) \cdot \ln(1-z) - z \right) +1}.
\]
Thus,
\begin{equation}
\label{eqn:G0}
G(0) = G(z_{0}) - \int_{0}^{z_{0}} G'(z) \dd z  \le H(z_{0}) + \int_{0}^{z_{0}}  \frac{\Gamma}{\Gamma \cdot \left( - (1-z) \cdot \ln(1-z) - z \right) +1} \dd z 
\end{equation}

Let $L(z) \eqdef H(z) + \int_{0}^{z}  \frac{\Gamma}{\Gamma \cdot \left( - (1-x) \cdot \ln(1-x) - x \right) +1} \dd x$.
We verify that $L'(z) \ge 0$ for $z \in [0,z_{1}]$. Indeed,
\begin{align*}
 L'(z) & = \frac{\Gamma \cdot \left( (1-\Gamma) \cdot \ln(1-z) - \Gamma \cdot z+1\right)}{\left( \Gamma \cdot (-(1-z) \cdot \ln(1-z)-z)+1\right)^{2}} +\frac{\Gamma}{\Gamma \cdot \left( - (1-z) \cdot \ln(1-z) - z \right) +1}  \\
 & =  \frac{\Gamma \cdot \left( (\Gamma \cdot z-2 \cdot \Gamma+1) \cdot \ln(1-z) - 2 \cdot \Gamma \cdot z+2\right)}{\left( \Gamma \cdot (-(1-z) \cdot \ln(1-z)-z)+1\right)^{2}}.
\end{align*}
Denote the numerator by $l(z) = (\Gamma \cdot z-2 \cdot \Gamma+1) \cdot \ln(1-z) - 2 \cdot \Gamma \cdot z+2$. Then $l'(z) = \Gamma \cdot \ln(1-z) + \frac{\Gamma \cdot z-1}{1-z} \le 0$. In order to prove $L'(z) \ge 0$ for $z \in [0,z_{1}]$, it suffices to verify that:
\begin{multline*}
l(z_{1}) =  (\Gamma \cdot z_{1}-2 \cdot \Gamma+1) \cdot \ln(1-z_{1}) - 2 \cdot \Gamma \cdot z_{1}+2 \\
= (\Gamma \cdot z_{1}-2 \cdot \Gamma+1) \cdot \left( - \frac{1-\Gamma \cdot z_{1}}{1-\Gamma} \right) - 2 \cdot \Gamma \cdot z_{1}+2 = \frac{(1- \Gamma \cdot z_{1})^{2}}{1 - \Gamma} \ge 0.
\end{multline*}

Finally, by the monotonicity of $L(z)$ on $[0,z_1]$ and the fact that $z_0 \le z_1$, we have
\begin{align*}
G(0) \overset{\eqref{eqn:G0}}{\le} L(z_{0}) \le L(z_{1}) & =  H(z_{1})+\int_{0}^{z_{1}}  \frac{\Gamma}{\Gamma \cdot \left( - (1-z) \cdot \ln(1-z) - z \right) +1} \dd z \\
& = K(z_{1})+\int_{0}^{z_{1}}  \frac{\Gamma}{\Gamma \cdot \left( - (1-z) \cdot \ln(1-z) - z \right) +1} \dd z \\
& = \frac{\Gamma \cdot (1-z_{1})}{1-\Gamma \cdot z_{1}} +\int_{0}^{z_{1}}  \frac{\Gamma}{\Gamma \cdot \left( - (1-z) \cdot \ln(1-z) - z \right) +1} \dd z  \\
& = \frac{\ln \alpha +1}{\ln \alpha} +\int_{0}^{1-\alpha}  \frac{\frac{\ln \alpha +1}{\ln \alpha +1- \alpha}}{\frac{\ln \alpha+1}{\ln \alpha+1-\alpha} \cdot \left( - (1-z) \cdot \ln(1-z) - z \right) +1} \dd z \\
& = 1+\frac{1}{\ln \alpha}+ \int_{\alpha}^{1}  \frac{\frac{\ln \alpha +1}{\ln \alpha+1-\alpha}}{\frac{\ln \alpha +1}{\ln \alpha+1-\alpha} \cdot \left( - y \cdot \ln y +y-1 \right) +1} \dd y \tag{let $y=1-z$} \\
& = 1+\frac{1}{\ln \alpha}+ \int_{\alpha}^{1}  \frac{\ln \alpha +1}{\left(\ln \alpha+1 \right) \cdot \left( - y \cdot \ln y + y \right) - \alpha} \dd y = 1,
\end{align*}
where the last equality follows from the definition of constant $\alpha$.

\end{proof} 

\end{document}